\documentclass[sn-apa]{sn-jnl}

\usepackage{graphicx}
\usepackage{multirow}
\usepackage{amsmath,amssymb,amsfonts}
\usepackage{mathrsfs}
\usepackage{amsthm}
\usepackage{booktabs}
\usepackage{appendix}
\usepackage{xcolor}
\usepackage{textcomp}
\usepackage{manyfoot}

\usepackage{algorithm}
\usepackage{algorithmicx}
\usepackage{algpseudocode}
\usepackage{listings}

\usepackage[none]{hyphenat}
\usepackage{enumerate}
\usepackage{enumitem}
\usepackage{setspace}
\usepackage{url}
\usepackage{cleveref}
\usepackage{caption}
\usepackage{float}
\usepackage{lmodern}
\usepackage{adjustbox}
\usepackage{subcaption}

\renewcommand{\S}{\mathcal{S}}

\renewcommand{\d}{\mathrm{d}}

\newtheorem{lem}{Lemma}
\renewcommand\thesection{\arabic{section}}
\renewcommand\thesubsection{\thesection.\arabic{subsection}}

\begin{document}
\sloppy{}
\title[Article Title]{Adapting Pairs Trading to Gambling Markets: A Case Study of the U.S.\ Presidential Election
}


\author*{\fnm{Haoyu} \sur{Liu}}\email{hl215@st-andrews.ac.uk}

\author{\fnm{Len} \sur{Thomas}}\email{len.thomas@st-andrews.ac.uk}
\author{\fnm{Benjamin} \sur{Baer}}\email{benjamin.baer@st-andrews.ac.uk}
\author{\fnm{Carl} \sur{Donovan}}\email{crd2@st-andrews.ac.uk}

\affil{\orgdiv{School of Mathematics and Statistics}, \orgname{University of St Andrews}, \orgaddress{\city{St Andrews}, \postcode{KY16 9SS}, \state{Fife}, \country{United Kingdom}}}
\abstract{ Pairs trading exploits mean reversion in the relationship between related assets. We adapt this idea to political betting markets by modelling the combined implied probability of the two major-party nominees with a latent Ornstein-Uhlenbeck process whose mean-reversion level varies over time and whose observations contain additive noise. Model parameters are estimated from regularly sampled odds data using a state-space likelihood, with consecutive repeated values represented by a single retained observation and the elapsed number of sampling intervals preserved in the continuous-time transition. Parametric-bootstrap upper prediction bounds identify signal times at which the combined implied probability is likely to decline, and a no-intercept Bradley-Terry-type model selects the candidate-specific odds quote. The candidate-selection model is trained on 2020 U.S.\ presidential-election data and evaluated out of sample on 2024 data. The 2024 analysis produced 130 signals, empirical one-step coverage of 95.1\%, a mean synthetic odds-price return of 1.86\%, and an unannualized per-trade Sharpe-type ratio of 1.12. These returns are frictionless descriptive quantities rather than executable betting-exchange profits. The results support the integrated framework as a proof of concept for two-candidate electoral markets.}

\keywords{Political Betting Markets; Trending Ornstein-Uhlenbeck Process; Parametric Bootstrap; Bradley-Terry Model}


\maketitle

\newpage

\section{Introduction}

Pairs trading is a common financial strategy that exploits temporary deviations from the historical relationship between two correlated assets \citep{elliott2005pairs}. The strategy involves identifying two related tradable instruments whose prices typically move together and taking opposite positions when their relative prices diverge, with the expectation that the spread will eventually revert towards its historical equilibrium. For example, if the price of one stock increases while that of a related stock remains unchanged, a trader may short the rising stock and buy the lagging stock in anticipation of subsequent convergence. The resulting long-short position can be represented by a spread, typically defined as a weighted difference between the prices of the two assets. When the weighting, or hedge ratio, is appropriately chosen, the portfolio may have limited sensitivity to broad market movements \citep{elliott2005pairs}.

A frequently cited illustration is the relationship between Coca-Cola and PepsiCo shares. When their prices depart from their historical relationship, a trader may buy the relatively underperforming stock and sell the relatively outperforming stock, seeking to profit when the spread converges. The strategy therefore relies on mean reversion: the phenomenon whereby the spread between related assets tends to return towards its long-run level. 

Mean-reverting behaviour may also arise in gambling markets. Because financial and gambling markets share several trading characteristics, some financial-market methods may be adapted to gambling-market settings \citep{liu2026comparison}.

Peer-to-peer (P2P) systems allow participants with similar roles to
interact and exchange resources directly, rather than assigning fixed client and server roles \citep{vansteen2023distributed}. This structure has also influenced the design of electronic markets in which participants trade with one another through a common platform. In betting exchanges, for example, participants submit back and lay orders against other market participants rather than accepting odds set solely by a bookmaker \citep{liu2026comparison}. In betting markets, this structure is implemented through betting exchanges, where participants trade against one another through a common platform rather than betting directly against a bookmaker. Participants can back an outcome or take the opposing position by laying it \citep{casadesus2019platform}.

Political betting markets provide a particularly relevant setting for the adaptation of pairs-trading methods.\ U.S.\ presidential elections are held every four years, with Election Day falling on the first Tuesday after the first Monday in November. Although third-party and independent candidates may also contest these elections, the Democratic and Republican parties have historically dominated presidential contests. Once the two major-party nominees have been confirmed, the market can therefore be treated approximately as a two-candidate contest, providing a natural parallel with pairs trading.

Betting on U.S.\ presidential elections through P2P exchanges often begins more than a year before Election Day. During the election year, candidates are selected through primary elections, followed by national conventions at which the party nominees are formally confirmed. In the 2020 election, Joe Biden secured the Democratic nomination on 18 August and Donald Trump received the Republican nomination on 24 August. In the 2024 election, Trump was nominated by the Republican Party on 15 July, while Kamala Harris accepted the Democratic presidential nomination on 22 August. Accordingly, this study focuses on betting activity between August and November, after the two major-party nominees had been confirmed.

Decimal betting odds can be inverted to obtain implied probabilities of winning. In a fair market, the implied probabilities across a complete set of mutually exclusive and collectively exhaustive outcomes sum to one. In observed betting markets, this sum may differ from one. When it exceeds one, the excess is referred to as the market over-round \citep{koning2023betting}. However, the present study considers only the Democratic and Republican nominees rather than all outcomes listed in the market. Their combined implied probability may therefore lie below one when probability mass is assigned to other candidates or outcomes.

The observed combined implied probabilities exhibit short-run fluctuations around an underlying level that changes gradually over time. This level generally increases as the election approaches, which is consistent with a declining contribution from alternative candidates and motivates a time-varying mean-reversion model.

The Ornstein-Uhlenbeck process is a continuous-time model commonly used to describe mean-reverting behaviour. It is particularly suited to variables that fluctuate around an equilibrium level, such as spreads in pairs-trading strategies \citep{mudchanatongsuk2008optimal}. Although pairs trading and Ornstein-Uhlenbeck models have been widely studied in financial markets, their application to high-frequency political betting exchange data remains limited. This paper addresses this gap by developing a pairs-trading-inspired framework for the U.S.\ presidential election betting market. We model the high-frequency dynamics of the summed implied probabilities using a trending Ornstein-Uhlenbeck (OU) process with additive white noise. We estimate the model parameters using a parametric bootstrap procedure and use the resulting bootstrap distribution to construct one-sided prediction intervals. We identify betting signals when observations exceed the corresponding prediction bounds and then apply an antisymmetric decision rule to determine which candidate to back. We use data from the 2020 U.S.\ presidential election to construct an antisymmetric decision model and evaluate it using out-of-sample data from the 2024 election. Finally, we assess strategy performance using the mean realized return and the Sharpe ratio, which measures mean return relative to return variability \citep{sharpe1998sharpe}.

The contribution of this paper is threefold. First, we adapt the concept of pairs trading from traditional financial markets to a P2P political betting market. Second, we develop a trending OU process with additive white noise to represent both short-term mean reversion and the long-term trend in the combined implied probability. Third, we combine bootstrap-based prediction intervals with an antisymmetric candidate-selection rule to construct a proof-of-concept betting strategy. Although we evaluate the proposed framework using data from the 2020 and 2024 U.S.\ presidential elections, it may also be applicable to future elections in which the betting market remains effectively dominated by two major-party candidates.

\subsection{Data and study design}

We use data from the 2020 U.S.\ presidential election as the training data. The observation period extends from 23 August 2020, immediately before the Republican National Convention and one day before Trump's formal renomination, to 4 November, the end of the observation period. The resulting cleaned dataset contains 21,181 observations. The two candidates considered are Trump of the Republican Party and Biden of the Democratic Party. \Cref{fig:2020-data} shows the implied probabilities of Trump and Biden during the 2020 observation period. The red line represents Trump and the blue line represents Biden. The two series show a clear inverse relationship over much of the sample. An important feature of the data is the period surrounding Trump's COVID-19 diagnosis, which is marked by the orange dashed lines. During part of this interval, both implied probabilities remain nearly unchanged and form a visible flat period. The series resume more substantial movements after this period. This episode illustrates how major political events can be associated with temporary changes in the dynamics of election betting odds.

\begin{figure}[htbp]
    \centering
    \includegraphics[width=5in]{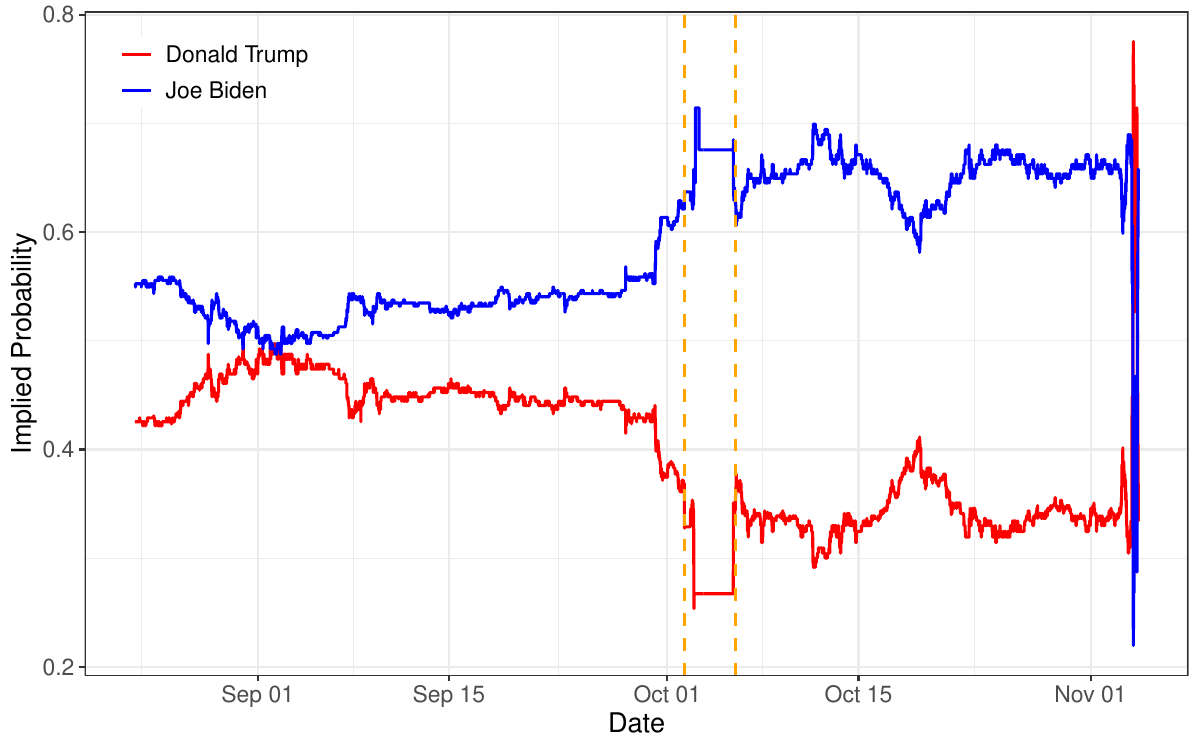}
    \captionsetup{justification=centering}
    \caption{Implied probabilities of Trump and Biden in the 2020 U.S.\ presidential election betting market. The orange dashed lines mark the period of Trump's COVID-19 infection, during which the series contains a visible flat segment.}
    \label{fig:2020-data}
\end{figure}

We obtained the data from Betdata, an independent platform that records political betting odds from the Betfair exchange and maintains an archive of historical political betting markets. Betfair, a UK-based company, operates a major peer-to-peer betting exchange \citep{gonccalves2019deep}. The 2020 observations are used to construct the betting strategy and estimate the candidate-selection model.

For out-of-sample evaluation, we use data from the 2024 U.S.\ presidential election. The observation period runs from 23 August, after the two major-party nominees had been confirmed, to 6 November. The two candidates are Trump of the Republican Party and Harris of the Democratic Party. Odds were recorded at five-minute intervals, yielding 21,689 observations after data cleaning. This sampling frequency captures changes in the market while avoiding excessive short-term variation.

\begin{figure}[htbp]
    \centering
    \includegraphics[width=5in]{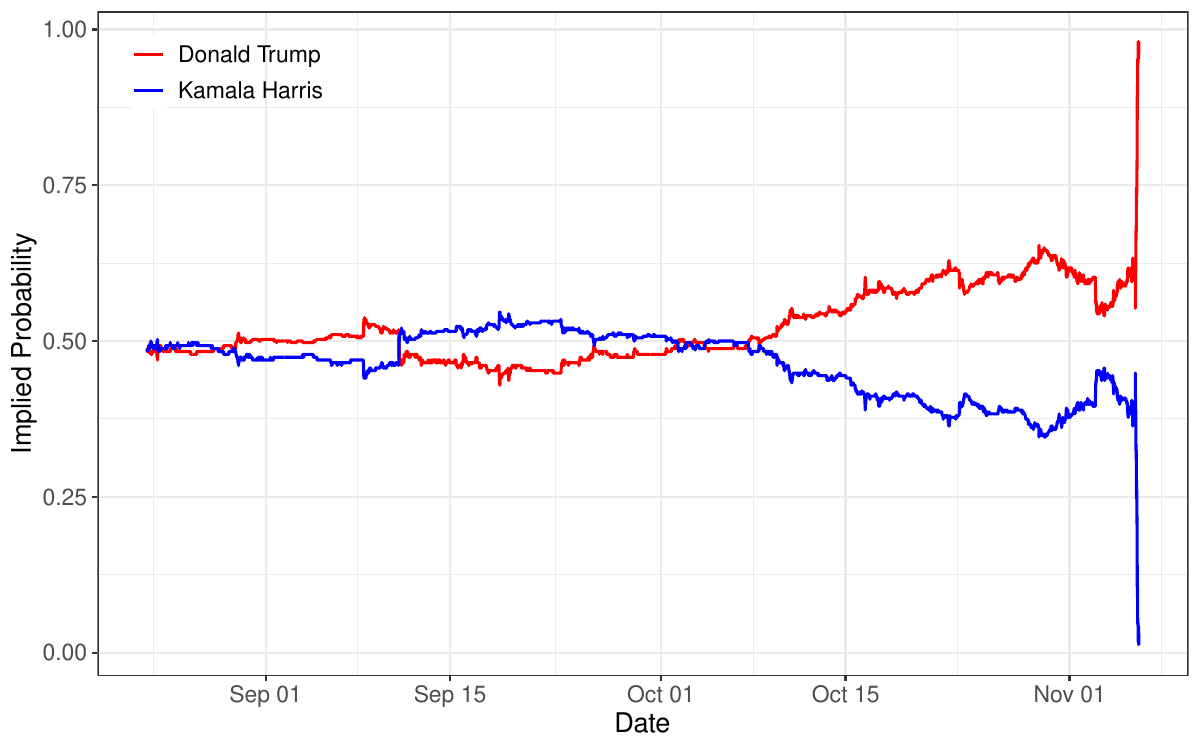}
    \captionsetup{justification=centering}
    \caption{Implied probabilities of Trump and Harris in the 2024 U.S.\ presidential election betting market.}
    \label{fig:2024-data}
\end{figure}

\Cref{fig:2024-data} shows the implied probabilities of the two major-party candidates over the observation period. The horizontal axis represents the observation date, while the vertical axis gives the implied probability. The red line represents Trump and the blue line represents Harris. The two series exhibit substantial co-movement over the sample, with periods of both convergence and divergence. This pattern provides the empirical motivation for modelling their relationship jointly.

During this period, the combined implied probability of the two nominees remained below one but exhibited an upward trend. The value below one reflects the probability mass that continued to be assigned to candidates or outcomes outside the two-nominee subset. Its upward movement is consistent with declining uncertainty, the gradual reallocation of probability away from non-nominees, risk-hedging behaviour \citep{aiba2018foreign}, speculative tail-risk pricing \citep{kelly2014tail}, and increasing recognition of the Republican-Democratic contest as the dominant binary outcome. Accordingly, the empirical analysis is restricted to the back odds of the Republican and Democratic nominees. 

\subsection{Literature review}
The Ornstein-Uhlenbeck (OU) process was originally introduced to model mean reversion in time series data \citep{uhlenbeck1930theory}. Since then, it has been widely applied in financial markets, particularly in interest rate modelling, volatility dynamics, and portfolio management. First, interest rates often display mean-reverting patterns around long-term equilibrium levels. OU processes are employed to model these interest rate movements, facilitating the pricing of interest rate derivatives and the management of interest rate risk \citep{vasicek1977equilibrium}. Second, in stochastic volatility models, OU processes represent the mean-reverting nature of market volatility, which is critical for option pricing and other derivative valuation \citep{fouque2000mean}. Third, in portfolio optimization, asset returns can be modelled with OU processes to incorporate mean-reverting behaviour into allocation and risk management strategies \citep{silva2019modeling}.

Building on the general description above, the pairs-trading literature has developed both empirical and model-based approaches to identifying and trading mean-reverting spreads. The modern quantitative formulation of the strategy is commonly traced to work conducted at Morgan Stanley during the 1980s, providing an influential empirical evaluation \citep{gatev2006pairs}. In model-based approaches, an Ornstein-Uhlenbeck process is often used to represent the spread between the two assets, rather than the individual asset prices. Deviations of the spread from its estimated equilibrium can then be used to define trading entry and exit signals \citep{elliott2005pairs, holy2025estimation}. The Royal Dutch-Shell pair provides a well-known illustration of this statistical-arbitrage framework \citep{reverre2001complete}. At very high sampling frequencies, however, transactions may occur at irregular intervals, making equally spaced discrete-time approximations inappropriate and complicating the estimation of mean-reversion and volatility parameters \citep{engle2000econometrics}. Empirical analysis using Brazilian equity data has also shown that the performance of pairs-trading strategies may depend on the sampling frequency and characteristics of the market under study \citep{perlin2009evaluation}.

In peer-to-peer gambling markets, betting odds generate high-frequency price series that share several empirical features with financial-market data \citep{angelini2022informational}. Existing research has primarily examined market efficiency, including whether historical odds and combined forecasts contain information about subsequent price movements or returns \citep{vlastakis2009efficient}. Financial concepts such as constant relative risk aversion and statistical arbitrage have therefore also been adapted to betting-market settings \citep{hanke2019numeraire, kets2014betting}. Two conceptually distinct forms of arbitrage should nevertheless be distinguished. Statistical arbitrage seeks positive expected returns from predictable or mean-reverting movements in betting prices, but remains risky because the anticipated convergence may not occur. Cross-price arbitrage, by contrast, exploits contemporaneous inconsistencies in the prices across mutually exclusive outcomes or trading venues. Under idealized simultaneous execution, sufficient liquidity, and the absence of transaction costs and execution risk, positions covering all possible outcomes may lock in a payoff independently of the realized outcome \citep{forrest201710}. The present study concerns statistical arbitrage; it uses a time-series model to identify potentially favourable future movements in odds and does not claim to construct a risk-free all-outcomes position.

\section{Methodology}

In this section, we present the methodology used in the data analysis. We first define the gambling market variables and discuss their equilibrium levels. We then introduce the statistical model, estimation procedure, and trading strategy. All analyses were conducted in R version 4.5.1. The statistical modeling, simulation, bootstrap procedures, and graphical analyses were implemented in R.

\subsection{Statistical preliminaries}
\subsubsection{Mean reversion}\label{fair-market benchmark}

In a gambling market, a gambler stakes money on a particular outcome. For example, suppose that a gambler stakes one pound on outcome $i$ at decimal odds of $3.0$. If the selected outcome occurs, the gambler receives three pounds in total, including the original one-pound stake. The resulting profit is therefore two pounds. Otherwise, the one-pound stake is lost. Let $O_{i,t}$ denote the decimal odds for outcome $i$ at time $t$. If a gambler stakes an amount $r$, the profit $F$ is
\begin{align*}
F=
\begin{cases}
    (O_{i,t}-1)r, & \text{if outcome $i$ occurs},\\
    -r, & \text{otherwise}.
\end{cases}
\end{align*}

We define the inverse odds by
\begin{align*}
    X_{i,t}=\frac{1}{O_{i,t}}.
\end{align*}
Inverse odds are often interpreted as probabilities implied by the market \citep{williams1999information}. They need not coincide with the true probabilities of the outcomes.

For a market with $n$ outcomes, we define the aggregate inverse-odds process by
\begin{align*}
    Y_t
    =
    \sum_{i=1}^{n}X_{i,t}
    =
    \sum_{i=1}^{n}\frac{1}{O_{i,t}}.
\end{align*}

We then turn to the short-run dynamics of $Y_t$. Betting-exchange prices can reverse after temporary movements, which is consistent with mean-reverting behavior. For example, anti-persistence has been observed in Betfair odds returns, together with mean-reverting behavior in inverse odds \citep{hardiman2010long}. Betting-exchange prices can also exhibit temporary mispricing \citep{angelini2022informational}. Similar short-run reversals are visible in the political betting in
\Cref{fig:2020-data} and \Cref{fig:2024-data}. Changes in individual inverse odds affect $Y_t$ directly. Their short-run corrective behavior therefore motivates a mean-reverting model for $Y_t$.

We now consider the equilibrium level of $Y_t$ under a fair market. Let $p_{i,t}$ denote the true probability that outcome $i$ occurs. The expected profit from a stake $r$ at odds $O_{i,t}$ is
\begin{align*}
    \mathbb{E}(F)
    =
    r(O_{i,t}p_{i,t}-1).
\end{align*}

Let $O^*_{i,t}$ denote the fair decimal odds for outcome $i$, defined as the odds for which the expected profit is zero. Setting $\mathbb{E}(F)=0$ gives
\begin{align*}
    O^*_{i,t}=\frac{1}{p_{i,t}}.
\end{align*}

For a complete set of mutually exclusive and collectively exhaustive outcomes, the true probabilities $p_{i,t}$ sum to one. Hence, one is the fair-market benchmark for the full-market inverse-odds sum. The observed process $Y_t$ may fluctuate around this level and need not equal one at every observation. If the observed inverse odds fluctuate around their fair values without systematic bias, then
\begin{align*}
    \mathbb{E}(Y_t)=1.
\end{align*}

In a betting exchange, the equilibrium level may differ from the fair-market benchmark. Unlike a bookmaker, a P2P betting exchange does not set a single set of odds with an explicit margin. Instead, the odds are formed through orders submitted by market participants and matched in the order book \citep{franck2013inter}. The resulting prices depend on market microstructure, including bid-ask spreads and available liquidity \citep{marginson2013exchange,flepp2017liquidity}. Consequently, the inverse of the best available odds across all outcomes need not sum to one.

We measure this departure using the over-round \citep{koning2023betting}, defined for a complete set of market outcomes as
\begin{align*}
    \Omega_t
    =
    Y_t-1
    =
    \sum_{i=1}^{n}\frac{1}{O_{i,t}}-1.
\end{align*}

A positive over-round value indicates that the inverse-odds sum lies above the fair-market benchmark. In a P2P betting exchange, this excess arises from the market quotes rather than from a margin imposed directly by the exchange. Market participants who provide liquidity are not required to quote at fair odds. They may quote prices that leave them a positive expected margin for supplying liquidity. Competition and arbitrage tend to reduce this margin, but they do not require it to disappear completely \citep{marginson2013exchange}.

Bid-ask spreads, limited liquidity, and differences in participants' valuations can therefore keep the inverse-odds sum away from the fair-market benchmark. A sum below one may offer an arbitrage opportunity, subject to liquidity, commission, and execution constraints. A sum above one gives a positive over-round \citep{koning2023betting}. Positive over-rounds have been documented empirically on Betfair \citep{franck2013inter}. The exchange platforms earn revenue separately through commission rather than through the over-round \citep{liu2026comparison}.

We therefore allow the equilibrium over-round, $\bar{\Omega}(t)$, to be positive. The corresponding equilibrium level of the full-market inverse-odds sum is
\begin{align*}
    1+\bar{\Omega}(t).
\end{align*}

However, the empirical analysis in this paper focuses on the two major-party nominees, who represent the dominant outcomes in the presidential election market. Other listed outcomes also contribute to the full-market inverse-odds sum, although their contribution is generally smaller. The inverse-odds sum for the two major candidates is therefore not required to exceed one, even when the full-market equilibrium is above one. Its equilibrium level depends on both the market over-round and the contribution of the remaining outcomes. Since these quantities may change over time, we allow the equilibrium level of the two-candidate process to vary with time. Thus, we model short-run deviations from this level as mean reverting.

\subsubsection{Ornstein-Uhlenbeck process with additive white noise}\label{sec:ou}
The equilibrium properties of $Y_t$ are summarized in the preceding subsection, provided in \cref{fair-market benchmark}. The empirical evidence discussed above provides separate motivation for modelling short-run deviations from the equilibrium as mean reverting. The Ornstein-Uhlenbeck process provides a continuous-time representation of such behavior. It has also been used to model betting-odds dynamics. For example, a time-dependent Ornstein-Uhlenbeck process has been applied to horse-racing betting odds \citep{sugawara2025ornstein}. We therefore assume that the latent state $P_t$ follows an Ornstein-Uhlenbeck process and that $Y_t$ is observed with additive white noise:
\begin{align}
    \d P_t 
    & = \theta (\mu-P_t)\d t + \sigma \d W_t, \label{eq:ou1} \\
    Y_t & = P_t + E_t, \nonumber
\end{align}
where $W_t$ is a Wiener process, $\mu$ is the equilibrium level to which the process reverts, $\theta$ is the rate of mean reversion, and $\sigma$ is the diffusion coefficient controlling the magnitude of the stochastic fluctuations. The term $E_t\stackrel{\text{i.i.d}}{\sim} \mathcal{N}(0,\omega^2)$ represents additive observation noise, where $\omega^2$ denotes its variance. In this setting, the observation noise captures short-term market microstructure effects not represented by the latent OU process \citep{holy2025estimation}. 

The Ornstein-Uhlenbeck (OU) process with additive white noise is well-suited for mean-reverting data, as it combines a stabilizing drift that pulls values back toward equilibrium with a noise term that captures random fluctuations. This balance makes it an effective model for systems such as implied probabilities, which exhibit both short-term volatility and long-term reversion \citep{gray2008macrofinancial}. This model was previously considered by Hol\'{y} and Tomanov\'{a} \citet{holy2025estimation} in the context of pairs trading. We note that whereas in our case the \emph{sum} of implied probabilities has an equilibrium, in pairs trading it is the \emph{difference} in the instruments traded (e.g., stocks) that has an equilibrium. 

\subsubsection{Trending OU process with noise}\label{trending}
A constant over-round can shift the equilibrium level of the full-market implied probability sum above one, but it does not by itself create a time trend. However, in political betting markets, the equilibrium level may evolve over time as uncertainty declines and probability mass is gradually reallocated from outcomes outside the candidate subset to the leading candidates. In such cases, a time-varying equilibrium provides a more appropriate modelling framework, thereby motivating the use of a time-varying-mean Ornstein-Uhlenbeck process with additive white noise to capture both mean reversion and evolving market dynamics.

In U.S.\ presidential elections, the implied probabilities of all leading nominees are influenced not only by short-term fluctuations and arbitrage, which generate mean reversion, but also by gradual structural shifts in expectations. This is because the equilibrium probability is influenced by persistent factors beyond short-term market fluctuations, including the following.
\begin{enumerate}
    \item Ongoing information flow: Continuous updates from polls, campaign events, debates, and media coverage gradually shift market expectations \citep{zaller1989bringing}.
    \item Structural trends in voter sentiment: Momentum effects and slowly accumulating public opinion induce persistent drift in the perceived equilibrium probabilities \citep{stimson2018public}.
\end{enumerate}

So far we have considered the total probability $Y_t = \sum_{i=1}^n 1/O_{i,t}$, defined by summing the inverse (decimal) odds across all candidates. Now we consider a subset $\mathcal{I} =\{1,2\}$ of candidates, in particular the Republican and Democratic nominees, and define the \emph{combined probability}
\begin{align*}
     Z_t 
    := \sum_{i \in \mathcal{I}} \frac{1}{O_{i,t}}.
\end{align*}
For $\mathcal{I}$ to be the leading candidates $(n=2)$, $Y_t = Z_t+Z^*_t$ where $Z^*_t$ is the sum of implied probabilities of all other candidates. 

Under the deterministic specification for the contribution of the remaining outcomes stated in \ref{trending OU with noise}, a time-varying-mean OU model for the full-market process induces a corresponding time-varying-mean OU representation for the two-candidate process. First, restricting the analysis to the two major-party nominees can contribute to the observed trend.The equilibrium level of the two-candidate process may therefore vary over time, as detailed in \ref{trending OU with noise}. Second, the continuing arrival of political information and gradual changes in public opinion may cause the equilibrium level of election-market probabilities to vary over time \citep{zaller1989bringing, stimson2018public}. This possibility motivates the use of a time-varying mean-reversion level, rather than implying that the equilibrium must drift monotonically in a particular direction.

Based on the model for the total probability $\{Y_t\}$, defined in \cref{eq:ou1} with the assumptions stated in \ref{trending OU with noise}, we derive that $\{Z_t\}$ follows a \emph{trending OU process with additive white noise}.  This is defined by  
\begin{align}
    \d Q_t 
    & = \theta \{\mu(t)-Q_t\} \, \d t + \sigma \d W_t, \label{eq:ou2} \\
    Z_t & = Q_t + E_t, \nonumber
\end{align}
where $\mu(\cdot)$ is a not necessarily constant function of time $t$, and the other terms are defined as in \cref{eq:ou1}. Prediction-market prices aggregate dispersed information and adjust as new information becomes available \citep{wolfers2004prediction, pennock2002modelling}. Consequently, the equilibrium level of an aggregate implied probability measure need not remain constant over the life of the market. We therefore posit that short-run deviations of $Z_t$ from a deterministic time-varying mean-reversion level are mean reverting, and model $Z_t$ using a trending Ornstein-Uhlenbeck process with additive white noise. The trending Ornstein-Uhlenbeck process allows the mean-reversion level $\mu(t)$ to vary over time, while preserving local mean-reverting behaviour \citep{thamrongrat2023application}.

\subsubsection{Parameter estimation}\label{parameter-estimation}


The observed series can include flat periods in which the values, and therefore $Z_t$, remain unchanged across several consecutive observations. We define a flat period as two or more consecutive scheduled observations with the same value. For likelihood-based parameter estimation, we retain one observation from each flat period. We also preserve the number of original observation intervals between successive retained observations.

Consecutive repeated values are represented by a single retained observation. Let $\tau_j$ denote the original observation index of the $j^{\text{th}}$ retained value, where $j=1,\ldots,J$, and let $J$ denote the total number of retained observations. The elapsed interval between two successive retained observations is
\begin{align*}
    q_j=\tau_j-\tau_{j-1}.
\end{align*}

Thus, $q_j$ gives the number of original observation intervals between the retained observations. This interval is used directly in the continuous-time OU state transition.

As described in \ref{One-step Conditional Moments}, the likelihood is constructed from the conditional distributions
\begin{align*}
    \mathbb{P}\left(
         Z_j
        \mid
         Z_1,\ldots, Z_{j-1};
        q_j
    \right),
    \quad
    j=2,\ldots,J.
\end{align*}
Each retained observation is evaluated conditional on the previously retained observations, with the elapsed observation interval incorporated through $q_j$.

We assume that genuinely missing observations satisfy the Missing Completely at Random (MCAR) condition, such that the probability that a scheduled observation is unavailable is independent of both the observed and unobserved values of the process \citep{heitjan1996distinguishing}. Under this assumption, the missing observations are treated as ignorable for likelihood-based inference conditional on the observed sampling times. This formulation accounts for the unequal elapsed times between successive quote changes without treating flat periods as missing observations. The estimation then proceeds sequentially in two steps.

In the first step, we model the conditional expectation of the process using the moment condition to get the marginal mean of $Z_t$, and we define $M\left(t\right)=\mathbb{E}\left(Z_t\right)$. To capture potential nonlinearity and smooth temporal variation, $M(t)$ is modelled non-parametrically using a smoother (penalized thin plate regression spline). 
This specification allows the model to flexibly accommodate gradual structural changes in the data without imposing restrictive parametric forms. For a standard Ornstein-Uhlenbeck process with additive white noise and a constant mean-reversion level, the unconditional expectation converges asymptotically to the stationary long-run mean $\mu$. By contrast, in a trending Ornstein-Uhlenbeck process the mean-reversion level is time-varying, and the state variable follows this evolving benchmark through an exponential adjustment mechanism. As a result, the expected value corresponds to a smoothed transformation of $\mu(t)$ rather than $\mu(t)$ itself. We know $\mathbb{E}(Q_t) = M(t)$ since $E_t$ is mean zero. Then we take expectations of \cref{eq:ou2}, and we get
\begin{align*}
    \d \mathbb{E}(Q_t)=
    \theta\{\mu(t)-\mathbb{E}(Q_t)\}\,\mathrm{d}t+0,
\end{align*}
because $\d W_t$ is white noise. Since $\mathbb{E}[\d Q_t]=\d M(t)$, it follows that $M(t)$ satisfies the ordinary differential equation
\begin{align}
    M^{\prime}(t)=\theta\{\mu(t)-M(t)\}. \label{eq:ODE}
\end{align}
The first-step estimation of $M(t)$ provides a data-driven characterization of the long-run trend in $\mathbb{E}[Z_t]$, which is then incorporated into the second-step likelihood estimation of the stochastic dynamics of $\{Z_t\}$.

In the second step, the estimated smooth mean function $\hat M(t)$ obtained from the first-stage smooth is treated as fixed and substituted into the likelihood function of the stochastic process $\{Z_t\}$. According to \cref{eq:ODE}, we build 
\begin{align}
   \hat \mu(t,\theta)=\hat M(t)+\frac{1}{\theta}\hat M^{\prime}(t). \label{eq:newmu}
\end{align} 
Parameter estimation then proceeds by maximizing the log-likelihood with respect to the structural parameter $\theta$ and the variance components $\sigma^2$ and $\omega^2$, while treating the estimated mean function $\hat M(\cdot)$ as fixed,
\begin{align}
        (\hat{\theta},\hat{\sigma}^2,\hat{\omega}^2)'
    = \arg\underset{\theta,\sigma^2,\omega^2} {\max}L(\theta,\sigma^2,\omega^2;\hat M(\cdot)) \quad \text{s.t.} \quad \sigma^2 > 0, \omega^2 > 0. \label{eq7:MLE}
\end{align}
After that, we plug $\hat \theta$ into \cref{eq:newmu} and get the resulting estimator of $\mu(t)$:
\begin{align}
    \hat \mu(t)= \hat M(t)+\frac{1}{\hat \theta}\hat M^{\prime}(t). \label{eq:m(t)}
\end{align}

The two-step procedure alternates between smooth mean estimation and likelihood-based estimation of the stochastic parameters until the convergence criterion is met. We use the limited-memory Broyden-Fletcher-Goldfarb-Shanno (L-BFGS-B) algorithm for numerical optimization. L-BFGS-B is a quasi-Newton method that approximates the Hessian matrix of the objective function and allows simple parameter bounds to be imposed \citep{saputro2017limited}. A Kalman filter is used to evaluate the likelihood while accounting for variation in the sampling intervals between successive retained observations \citep{kalman1960new}. This interval determines the corresponding state-transition coefficient and process-noise variance in \ref{One-step Conditional Moments}.

\subsection{Betting strategy}\label{betting strategy}

We treat decimal odds as tradable price quotes and formulate a simplified buy-sell strategy analogous to trading a financial asset. This representation is an abstraction of betting-exchange trading and does not reproduce the complete cash flows of matched back and lay bets, including lay liability, bid-ask spreads, commission, queue priority, partial execution or settlement at event resolution. The resulting return measures only the proportional change in the selected decimal-odds quote between entry and exit. It does not represent the profit that would necessarily be obtained from an executed trade on a betting exchange, because it does not account for bid-ask spreads, commission, queue priority, partial execution, or other trading frictions \citep{liu2026comparison}.

Suppose that a statistical signal is generated at time $t$ and that the pairwise comparison model selects candidate $i$. The observed decimal odds $O_{i,t}$ are taken as the entry value for evaluating the subsequent odds-price movement.

The odds series often contains flat periods. Therefore, the quote observed
at $t+1$ may be identical to the entry quote. Closing every position mechanically at $t+1$ would produce many zero price changes that simply reflect unchanged quotes rather than a movement following the signal. We therefore define the exit lag as
\begin{align*}
    k=
    \inf\left\{
        h\in\mathbb{N}_{+}:
        O_{i,t+h}\neq O_{i,t}
    \right\}.
\end{align*}
The exit time $t+k$ is the first subsequent observation at which the selected candidate's odds differ from the entry value. This rule allows the return calculation to capture the first observed price movement after the signal.

The value of $k$ is not known when the position is opened. It is determined sequentially as new observations become available. The procedure therefore does not use future information when the entry decision is made. A positive odds-price return occurs when $O_{i,t+k}>O_{i,t}$. The framework allows only one open position at a time. Any new signals that occur before the current position is closed are ignored. A new position can be opened from the first forecast origin after the previous position has been closed. If the selected odds do not change again before the end of the sample, the position is classified as incomplete and is excluded from the realised-return calculation.

We propose the principle that bets should be placed only when they have a high probability of being profitable. 
We take the perspective that realistically modelling the odds $O_{i,t}$ is untenable so proceed by employing the distribution of the combined probability $Z_t$ and some regression modelling described later. As we will see, this naturally decomposes the problem into two components: determining when to place a bet, then determining which bet to place.  For simplicity we do not consider transaction fees that are incurred when trading between buy and sell odds. Gamblers do not simultaneously place opposing bets; for instance, a bet placed on the Republican nominee precludes placing a bet on the Democratic nominee at the same time. To provide trading recommendations at each time point, we adopt an online learning framework that dynamically determines the signal timing for bet placement.

\subsubsection{When to place a bet}

In this section, we determine when a statistical signal is generated. The one-step predictive distribution is defined over the next scheduled observation interval. Thus, $Z_{t+1}$ denotes the combined implied probability at the next observation after time $t$. A decrease in the combined implied probability, $Z_{t+1}<Z_t$, implies that the implied probability of at least one candidate decreases. The corresponding decimal odds must therefore increase for at least one $i\in\mathcal{I}$. We use this relationship to define the trading signal. A signal is generated at time $t$ when a decrease in $Z_{t+1}$ is predicted with sufficiently high probability.

Let $\hat F_{t+1\mid t}$ denote the estimated predictive distribution of $Z_{t+1}$ conditional on the observations $Z_1,\ldots,Z_t$. The corresponding upper $(1-\alpha)$ predictive quantile is
\begin{align*}
    \hat U_{t+1\mid t}^{1-\alpha}=
    \hat F_{t+1\mid t}^{-1}(1-\alpha).
\end{align*}
We define the statistical signal as
\begin{align*}
    S_t=\mathbf{1}
    \left\{
        Z_t>
        \hat U_{t+1\mid t}^{1-\alpha}
    \right\}.
\end{align*}

Under a continuous predictive distribution, with $\alpha\in(0,1)$, $S_t=1$ implies
\begin{align*}
    \mathbb{P}
    \left(
        Z_{t+1}\le Z_t
        \mid
        Z_1,\ldots,Z_t
    \right)
    = 1-\alpha.
\end{align*}
The probability is conditional on the observations available up to time
$t$. Since $\{Z_t\}$ exhibits mean-reverting behaviour, a value of $Z_t$
above its time-varying equilibrium level generally increases the probability of a subsequent downward movement. The corresponding predictive quantile therefore defines a one-sided upper prediction bound
with nominal coverage $1-\alpha$.

In time series analysis, bootstrapping is a widely used technique for constructing prediction intervals. It estimates the distribution of an estimator by resampling, often with replacement, from the observed data or from a model fitted to the data \citep{thombs1990bootstrap}. Given the observed series $\{Z_1,\ldots,Z_n\}$, let $n$ denote the total number of observations and let $l<n$ denote the initial training-sample size. At each forecast origin $t=l,\ldots,n-1$, we apply the following parametric bootstrap procedure \citep[see, e.g.,][p.~169]{efron2021computer}:

\begin{enumerate}

\item For a given forecast origin $t$, use the observations $Z_1,\ldots,Z_t$ as the available data history.

\item Fit the smooth expectation function $\hat M^{(0)}(\cdot)$ to $Z_1,\ldots,Z_t$. Estimate $\hat\theta^{(0)}$, $\hat\sigma^{2(0)}$, and $\hat\omega^{2(0)}$ by maximum likelihood.

\item Apply the Kalman filter recursions in
\cref{innovation covariance} to $Z_1,\ldots,Z_t$. This gives the
one-step-ahead state prediction $\hat Q_{t+1\mid t}^{(0)}$ and the
prediction-error variance $P_{t+1\mid t}^{(0)}$. The predictive distribution of the next observation is
\begin{align*}
Z_{t+1}
\mid
Z_1,\ldots,Z_t
\sim
\mathcal{N}\left(
\hat Q_{t+1\mid t}^{(0)},
P_{t+1\mid t}^{(0)}
+
\hat\omega^{2(0)}
\right).
\end{align*}

\item For $b=1,\ldots,B$, perform the following steps:
\begin{enumerate}

\item Generate a bootstrap series recursively. Set $Z_1^{(b)}=Z_1$. For $s=1,\ldots,t-1$, update the filtered state using $Z_1^{(b)},\ldots,Z_s^{(b)}$. Propagate the state to time $s+1$ and draw
\begin{align*}
Z_{s+1}^{(b)}
\sim
\mathcal{N}\left(
\hat Q_{s+1\mid s}^{(0,b)},
P_{s+1\mid s}^{(0,b)}
+
\hat\omega^{2(0)}
\right).
\end{align*}

\item Refit the model to $Z_1^{(b)},\ldots,Z_t^{(b)}$. This gives
$\hat M^{(b)}(\cdot)$, $\hat\theta^{(b)}$,
$\hat\sigma^{2(b)}$, and $\hat\omega^{2(b)}$.

\item Apply the Kalman filter to the original observed history $Z_1,\ldots,Z_t$ using the bootstrap parameter estimates. This gives $\hat Q_{t+1\mid t}^{(b)}$ and
$P_{t+1\mid t}^{(b)}$.

\item Draw the bootstrap prediction
\begin{align*}
Z_{t+1}^{*(b)}
\sim
\mathcal{N}\left(
\hat Q_{t+1\mid t}^{(b)},
P_{t+1\mid t}^{(b)}
+
\hat\omega^{2(b)}
\right).
\end{align*}

\end{enumerate}
\end{enumerate}

Then we compute the upper one-sided prediction bound as 
\begin{align}
    \hat{U}_{t+1}^{1-\alpha}=\min\left\{z_{t+1} : \hat{F}_{t\mid t-1}(z_t)\ge 1-\alpha \right\}, \label{eq9:prediction bound}
\end{align}
where $\hat{F}_{t\mid t-1}(z_t)=\frac{1}{B}\sum_{b=1}^{B}\mathbf{1}\left(\hat{Z}_{t+1}^{*(b)}\le z_{t+1}\right),$ so that the estimated prediction interval is $(-\infty, \hat{U}_{t+1}^{1-\alpha})$. The signal to place a bet at time $t$ is that the current value $Z_{t}$ exceeds the upper predictive bound for $Z_{t+1}$, that is, when $Z_{t} > \hat{U}_{t+1}^{1-\alpha}$. This expression indicates that the combined probability will decrease from $Z_t$ with probability at least $1-\alpha$.

To assess the calibration of these prediction intervals, we calculate their empirical coverage over the set of out-of-sample forecast origins $\mathcal{T}$:
\begin{align*}
    \hat{C}_{1-\alpha}=\frac{1}{|\mathcal{T}|}\sum_{t\in\mathcal{T}}\mathbf{1}
    \left\{
        Z_{t+1}
        \leq
        \hat{U}_{t+1}^{1-\alpha}
    \right\}.
\end{align*}
A well-calibrated upper prediction interval should have empirical coverage close to its nominal level $1-\alpha$.

\subsubsection{Which bet to place}

Once a statistical signal has been identified, the next step is to determine which candidate-specific odds to select. Let $Y_{i,t}=\frac{1}{O_{i,t}}, \quad i\in\{1,2\}$ denote the implied probability of candidate $i$ at signal time $t$, where $i=1$ denotes the Republican nominee and $i=2$ denotes the Democratic nominee. Let $\S$ denote the set of signal times. At time $t$, the most recently observed change in the implied probability
of candidate $i$ is defined as
\begin{align*}
    \Delta Y_{i,t}
    =
    Y_{i,t}-Y_{i,t-1}.
\end{align*}
The candidate-specific feature vector is therefore
\begin{align*}
    X_{i,t}
    =
    \begin{pmatrix}
        Y_{i,t}\\
        \Delta Y_{i,t}
    \end{pmatrix}
    \in\mathcal X,
    \quad i\in \{1,2\},
\end{align*}
where $\mathcal X\subseteq\mathbb R^2$ denotes the individual feature space.

For model training, define the subsequent change in the implied probability by
\begin{align*}
    \Delta^{+} Y_{i,t}
    =
    Y_{i,t+1}-Y_{i,t}.
\end{align*}
A smaller value of $\Delta^{+} Y_{i,t}$ represents a larger decrease in the
implied probability. For example, a change of -0.03 represents a larger decrease than a change of -0.01. Candidate 1 is therefore preferred to candidate 2 when $\Delta^{+} Y_{1,t}<\Delta^{+} Y_{2,t}$.

The binary response is defined as
\begin{align*}
    L_t
    =
    \begin{cases}
        1, & \text{if } \Delta^{+} Y_{1,t}<\Delta^{+} Y_{2,t},\\
        0, & \text{otherwise}.
    \end{cases}
\end{align*}

Thus, $L_t=1$ indicates that candidate 1 experiences the larger subsequent decrease in implied probability, whereas $L_t=0$ indicates that candidate 2 experiences the larger subsequent decrease. Observations for which the two subsequent changes are equal are treated as ties and are excluded from the pairwise model fitting \citep{hunter2004mm}.

Because the labels assigned to the two candidates have no intrinsic ordering, the comparison rule should reverse when the order of the candidates is exchanged. We therefore define the pairwise feature difference by
\begin{align*}
    D_t
    =
    X_{1,t}-X_{2,t},
\end{align*}
and the corresponding comparison score by
\begin{align*}
    d(X_{1,t},X_{2,t})
    =
    \boldsymbol{\beta}^{\top}
    \left(
        X_{1,t}-X_{2,t}
    \right),
\end{align*}
where $\boldsymbol{\beta}\in\mathbb R^2$ is estimated from the training data. This construction is antisymmetric because
\begin{align*}
    d(X_{2,t},X_{1,t})
    &=
    \boldsymbol{\beta}^{\top}
    \left(
        X_{2,t}-X_{1,t}
    \right)\\
    &=
    -\boldsymbol{\beta}^{\top}
    \left(
        X_{1,t}-X_{2,t}
    \right)\\
    &=
    -d(X_{1,t},X_{2,t}).
\end{align*}
Thus, exchanging the order of the two candidates reverses the direction of the comparison while preserving its magnitude.

Applying a logistic link to the antisymmetric comparison score gives a Bradley-Terry-type pairwise comparison model
\citep{hunter2004mm}. The conditional probability that candidate 1 is preferred to candidate 2 is
\begin{align}
    \pi_t
    &:=
    \mathbb P
    \left(
        L_t=1
        \mid
        X_{1,t},X_{2,t}
    \right)\\
    &=
    \frac{
        1
    }{
        1+
        \exp\left[
            -\boldsymbol{\beta}^{\top}
            \left(
                X_{1,t}-X_{2,t}
            \right)
        \right]
    }.
    \label{eq13: Bradley-Terry Model}
\end{align}
It follows that reversing the candidate order gives
\begin{align*}
    \mathbb P
    \left(
        L_t=0
        \mid
        X_{2,t},X_{1,t}
    \right)
    =
    1-\pi_t.
\end{align*}

No intercept is included in the pairwise logistic model. If a non-zero intercept $\beta_0$ were included, the comparison score would become
\begin{align*}
    d(X_{1,t},X_{2,t})
    =
    \beta_0
    +
    \boldsymbol{\beta}^{\top}
    \left(
        X_{1,t}-X_{2,t}
    \right),
\end{align*}
whereas reversing the candidate order would give
\begin{align*}
    d(X_{2,t},X_{1,t})
    =
    \beta_0
    -
    \boldsymbol{\beta}^{\top}
    \left(
        X_{1,t}-X_{2,t}
    \right).
\end{align*}
This is not generally equal to $-d(X_{1,t},X_{2,t})$. The intercept is therefore fixed at zero to preserve antisymmetry \citep{hunter2004mm}.

The candidate-selection rule is defined by
\begin{align}
    \delta(X_{1,t},X_{2,t})
    =
    \begin{cases}
        1,
        & \pi_t\ge0.5,\\ 
        0,
        & \pi_t<0.5,
    \end{cases}
    \label{eq:candidate-selection}
\end{align}
where $\delta=1$ denotes selection of the Republican nominee's odds quote, and $\delta=0$ denotes selection of the Democratic nominee's odds quote. Equivalently, candidate 1 is selected when $d(X_{1,t},X_{2,t})\ge0$, whereas candidate 2 is selected when $d(X_{1,t},X_{2,t})<0$.

\section{Case study: application to U.S.\ presidential elections}
\subsection{Training data: 2020 U.S.\ election}
The 2020 data contain flat periods and a small number of genuinely missing scheduled observations. We handle these observations using the procedure described in \cref{parameter-estimation}, with the elapsed number of sampling intervals between successive retained observations preserved in the continuous-time state transition.

We use the first 10,000 observations as an initial training set. This corresponds to approximately the first half of the sample and provides a sufficiently long series for initial parameter estimation while retaining a substantial number of observations for out-of-sample evaluation. The resulting parameter estimates are used to estimate the starting parameters ($\hat\mu(\cdot),\hat\theta,\hat\sigma^2,\hat\omega^2$) for the remaining data. An online learning framework is then applied, updating predictions sequentially at each time point. For each step, we employ bootstrapping to construct an upper one-sided prediction interval and assess whether the subsequent observation falls outside this interval.

\subsubsection{When to place a bet}

After the simulation study in \ref{Data Simulation}, we applied the model to the 2020 U.S.\ presidential election as a case study. In this election, the Republican candidate was Trump, and the Democratic candidate was Biden. The analysis period spanned from $23$ August, when both candidates were officially nominated, to 4 November 2020, the day following Election Day. Odds were recorded at five-minute intervals, yielding a total of 21,181 observations. The regular five-minute grid was retained for defining forecast origins and evaluating one-step-ahead prediction coverage. For parameter estimation, consecutive repeated values within each flat period were represented by a single retained observation. The elapsed number of observation intervals between successive retained observations was incorporated into the continuous-time OU state transition. When a statistical signal generated an entry, the selected odds quote was monitored until its first subsequently observed change. Repeated observations at the entry quote therefore extended the holding period of the position but were not treated as additional entries while
the position remained open. In addition, an unpredictable event occurred on 2 October 2020,  when Trump tested positive for COVID-19, leading to a sharp decline in the implied probability. Following his recovery on 6 October, the probability increased again. Aside from this short episode, the overall sum of implied probabilities displayed an upward trend. Using the trending OU process with additive white noise, we identified 107 signal points $\S$. At each signal point $t$, we took a unit-notional synthetic position in the selected decimal-odds quote and closed the position according to the pre-specified exit rule in \cref{betting strategy}. 

If we exclude the period when Trump contracted COVID-19, the resulting plot remains well described by the trending OU process with additive white noise and there are 201 trading signal points $\S$ out of 4,135 non-flat time observations (the first point of flat periods), as shown in \Cref{fig:2}. Removing the COVID-19 interval changes the fitted trend and predictive distribution over the remaining sample; the two signal counts are therefore obtained from separately fitted analyses and are not directly nested. The empirical coverage rate is 95.2\%, close to the nominal 95\% level specified in \ref{Model Selection}. The COVID-19 episode motivates two possible strategies for handling extreme events in peer-to-peer gambling markets.
\begin{enumerate}
    \item Event-driven betting. Gamblers may choose to place bets immediately following the occurrence of extreme events. As shown in \Cref{fig:2}, the entire period associated with Trump's COVID-19 diagnosis generated signals to place bets. However, a key challenge lies in determining when such extreme events have concluded.
    \item Risk-avoidance strategy. Alternatively, gamblers may prefer to abstain from betting during extreme events to avoid exposure to heightened uncertainty. From \Cref{fig:2}, we observe that following the specified COVID-19 interval, the observed series again displayed fluctuations around the fitted time-varying mean-reversion level. This post-event sequence is again well captured by the trending OU process with noise, providing a stable foundation for strategic betting.
\end{enumerate}

\begin{figure}[htbp]
    \centering
    \includegraphics[width=5in]{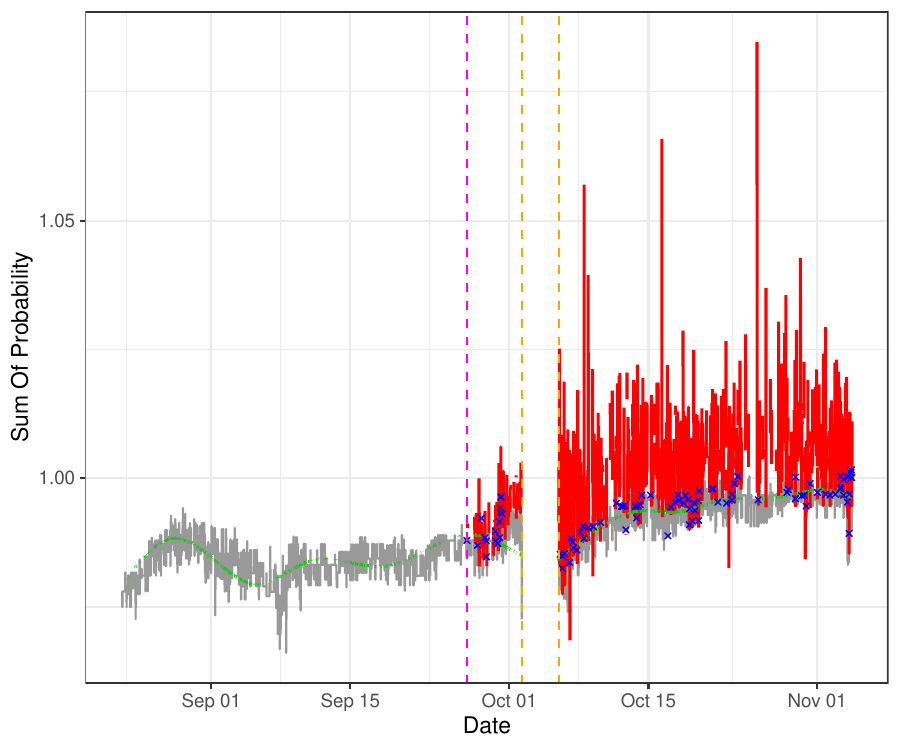}
    \captionsetup{justification=centering}
    \caption{Betting signals and model fit for the 2020 U.S.\ presidential election. The grey line shows the observed combined implied probability. The green line shows the estimated time-varying mean-reversion level $\mu(t)$, and the red line shows the upper one-step-ahead prediction bound. The blue crosses mark the signal points. The purple dashed vertical line marks the beginning of the out-of-sample prediction period, while the two orange dashed vertical lines mark 2 and 6 October, respectively, corresponding to the COVID-19 period.}
    \label{fig:2}
\end{figure}

\subsubsection{Estimator performance}
In the first step, we estimate $\hat{M}(t)$, where the point estimate of the long-run mean is 0.988. In the second step, we employ the L-BFGS optimization method to obtain the maximum likelihood estimators for $(\theta,\sigma^2,\omega^2)$. After that, we substitute $\hat \theta$ into \cref{eq:m(t)} to get $\mu(t)$. Based on \Cref{fig:2}, $\mu(t)$ exhibits an upward trajectory, increasing from approximately $0.975$ to a value close to 1, which is consistent with the trending assumption in \cref{trending}. The results are reported in \Cref{table:2020}, which presents the final point estimates of the parameters. The online-learning parameter estimation procedure required approximately 25,200 seconds of computation time on an Intel Core i7-6700HQ processor (2.60 GHz) with 24 GB of RAM. We employ a parametric bootstrap to estimate standard errors, generating synthetic samples from the fitted trending OU process with additive noise model evaluated at the final parameter estimates in \Cref{table:2020}.

\begin{table}[htbp]
    \caption{Parameter estimates for the 2020 U.S.\ presidential election}
    \label{table:2020}
    \centering
    \begin{tabular}{ccccc}
    \hline
    Parameter & Point estimate & S.E. & 95\% CI lower & 95\% CI upper \\
    \hline
    $\theta$
    & 0.134
    & 0.00220
    & 0.130
    & 0.139 \\
    \hline
    $\sigma^2$
    & $4.83 \times 10^{-6}$
    & $2.64 \times 10^{-7}$
    & $4.34 \times 10^{-6}$
    & $5.38 \times 10^{-6}$ \\
    \hline
    $\omega^2$
    & $2.73 \times 10^{-6}$
    & $2.11 \times 10^{-7}$
    & $2.31 \times 10^{-6}$
    & $3.14 \times 10^{-6}$ \\
    \hline
    \end{tabular}
\end{table}

\subsubsection{Which bet to place}
From \Cref{fig:2}, we observe that after 10,000 time points, the inverse odds of Biden (Democratic Party) exceed those of Trump (Republican Party). This indicates that in the peer-to-peer gambling exchange, market participants increasingly believed that Biden was more likely to win the election. As a result, the price of betting on Biden became lower than that of Trump. In practice, if we excluded extreme events of the COVID-19 period, after applying the Bradley-Terry Model from \cref{eq13: Bradley-Terry Model}, we used it to determine which bet to place. At each signal time $t$, the fitted Bradley-Terry-type model produces the conditional preference probability $\pi_t$ defined in \cref{eq13: Bradley-Terry Model}. Candidate 1 represents the Republican nominee, whereas candidate 2 represents the Democratic nominee. The estimated coefficients obtained from the 2020 U.S.\ presidential election training data are reported in \Cref{table:4}. The candidate-specific odds are selected according to \cref{eq:candidate-selection}. Specifically, the Republican nominee's odds are selected when $\pi_t\geq0.5$, whereas the Democratic nominee's odds are selected when $\pi_t<0.5$. Equivalently, candidate 1 is selected when $d(X_{1,t},X_{2,t})\geq0$, and candidate 2 is selected when $d(X_{1,t},X_{2,t})<0$.

\subsubsection{Model diagnostics}
To evaluate the adequacy of the fitted trending Ornstein-Uhlenbeck (OU) process with additive white noise and the Bradley-Terry model, we conducted the diagnostic checks reported in \ref{Model Diagnostic}. For the trending OU model, we examined the standardized residuals using a Q-Q plot and an autocorrelation function (ACF) plot. The central part of the Q-Q plot is approximately linear, indicating that the Gaussian distribution provides a reasonable description of the central portion of the residual distribution. However, substantial deviations occur in both tails, particularly in the lower tail, indicating heavier-than-normal tails and the presence of several extreme residuals. The Gaussian assumption should therefore be regarded as an approximation that does not fully capture the tail behaviour.

The residual ACF is close to zero at most lags, although several autocorrelations exceed the approximate 95\% confidence bounds. This suggests that the model captures most of the temporal dependence in the data, but that some weak residual serial correlation remains. Inspection of the residuals against time and fitted values did not reveal a clear systematic change in variance.

For the Bradley-Terry model, we assessed calibration using a Hosmer-Lemeshow-type diagnostic, which compares observed outcomes with model-implied probabilities across groups with similar predicted probabilities \citep{hosmer1997comparison}. The resulting $p$-value of 0.29 provides insufficient evidence to reject the null hypothesis of adequate calibration. Thus, we found no strong evidence of systematic miscalibration across the probability groups considered. Overall, the diagnostics indicate that the models provide a broadly adequate description of the central structure of the data, although the heavy-tailed residual distribution and the remaining weak serial dependence should be acknowledged.

\subsection{Test data: 2024 U.S.\ election}
\subsubsection{Data analysis}
We use data from the 2024 U.S.\ presidential election to evaluate the proposed selection models. During 2024 election, the Republican candidate was Trump, and the Democratic candidate was Harris. The first out-of-sample signal was generated after 11:15\,a.m. on 26 August 2024, and the final bet was placed before 6:15\,a.m. on 6 November 2024. From \Cref{figure:1}, the red line is the one-sided prediction interval. A total of 130 points fall outside the upper one-sided prediction interval. The empirical coverage rate is 95.1\%, which is slightly larger than the target 95\%. This indicates a slight over-coverage, suggesting that the constructed intervals are slightly conservative. The out-of-sample period produced 130 statistical signals. None of these signals occurred while an existing position remained open, and all selected odds quotes changed again before the end of the sample. Therefore, all 130 signals resulted in completed trades for performance evaluation and can be interpreted as betting signals $\S$.

\begin{figure}[htbp]
    \centering
    \includegraphics[width=5in]{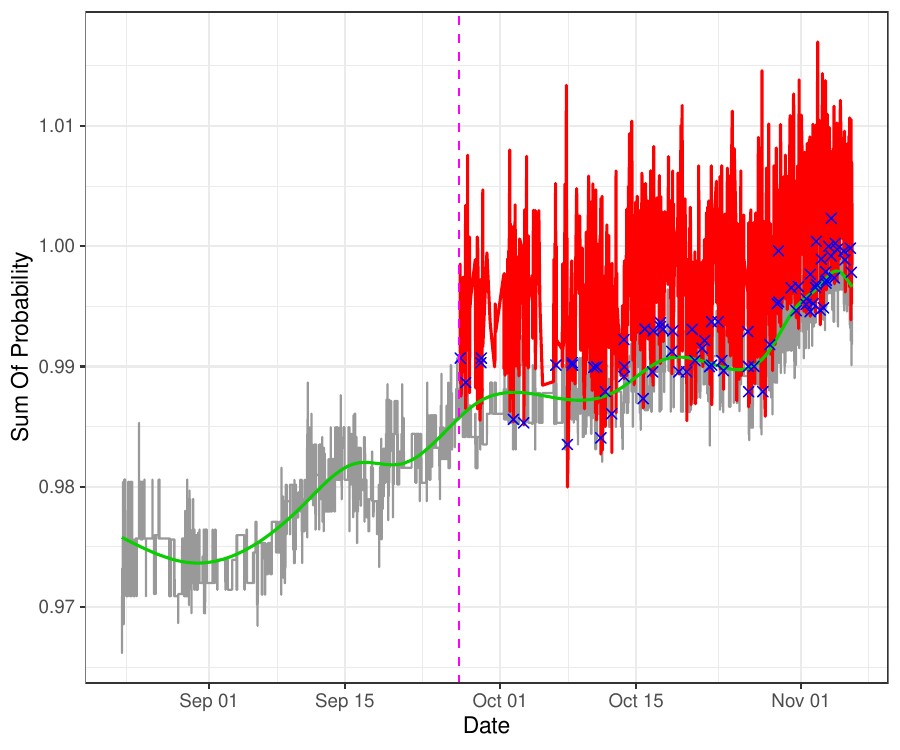}
    \captionsetup{justification=centering}
    \caption{Combined implied probability of the two major-party nominees in the 2024 U.S.\ presidential election. The grey line shows the observed series, the green line shows the estimated time-varying mean-reversion level $\mu(t)$, the red line shows the upper one-step-ahead prediction bound, and the blue crosses identify signal times. The purple dashed vertical line marks the beginning of the out-of-sample forecasting period.}
    \label{figure:1}
\end{figure}


In the first step, we estimated smooth mean function $\hat{M}(t)$, which the fitted mean at the end of the estimation period was 0.985. The second step produced the parameter estimates $(\theta,\sigma^2,\omega^2)$. From \Cref{figure:1}, the time-varying mean-reversion parameter $\mu(t)$ demonstrates a clear increasing trend, rising from about $0.97$ to nearly 1. The results are reported in \Cref{table:2024}, which presents the final point estimates of the parameters. 

\begin{table}[htbp]
    \caption{Parameter estimates for the 2024 U.S.\ presidential election}
    \label{table:2024}
    \centering
    \begin{tabular}{ccccc}
    \hline
    Parameter & Point estimate & S.E. & 95\% CI lower & 95\% CI upper \\
    \hline
    $\theta$
    & 0.181
    & 0.0108
    & 0.159
    & 0.202 \\
    \hline
    $\sigma^2$
    & $4.15 \times 10^{-6}$
    & $2.91 \times 10^{-7}$
    & $3.62 \times 10^{-6}$
    & $4.76 \times 10^{-6}$ \\
    \hline
    $\omega^2$
    & $2.35 \times 10^{-6}$
    & $2.27 \times 10^{-7}$
    & $1.91 \times 10^{-6}$
    & $2.80 \times 10^{-6}$ \\
    \hline
    \end{tabular}
\end{table}

\subsubsection{Model evaluation}
We apply the Bradley-Terry model trained on the 2020 U.S.\ presidential election data to select the candidate-specific odds quote at each signal time. We use a simplified odds-price representation in which decimal odds are treated as price values. The selected odds $O_{i,t}$ are used as the entry value, and $O_{i,t+k}$ is used as the exit value. For the $j^{\text{th}}$ trade, we define the realized odds-price return as
\begin{align*}
R_{j,t\to t+k}=\frac{O_{i_j,t+k}-O_{i_j,t}}{O_{i_j,t}},
\end{align*}
where $O_{i_j,t}$ and $O_{i_j,t+k}$ denote the entry and exit decimal odds, respectively. Under this price-based representation, the trade generates a positive return when
$O_{i_j,t+k}>O_{i_j,t}$. Thus, $R_{j,t\to t+k}$ measures the proportional change in the selected decimal-odds quote between entry and exit.

We summarize the profitability of the strategy using the sample mean of the realized odds-price returns,
\begin{align*}
\bar{R}=\frac{1}{N}\sum_{j=1}^{N}R_j,
\end{align*}
where $N$ is the number of completed trades generated by the trading signals. The quantity $\bar{R}$ is the mean realized return across trades, rather than the theoretical expected return.

We assess return variability using the sample standard deviation
\begin{align*}
s_R=\sqrt{\frac{1}{N-1}\sum_{i=1}^{N}
\left(R_j-\bar{R}\right)^2
}.
\end{align*}

We then report the unannualized per-trade Sharpe-type ratio
\begin{align*}
SR=\frac{\bar{R}-R_f}{s_R},
\end{align*}
where $R_f$ denotes the risk-free return over the holding period. Because the trades are short-lived and the corresponding risk-free return is negligible, we set $R_f=0$, giving
\begin{align*}
SR=\frac{\bar{R}}{s_R}.
\end{align*}
This statistic measures the mean realized odds-price return relative to its cross-trade variability. Because the trades may have unequal holding periods and the statistic is not annualized, we interpret it as a descriptive per-trade Sharpe-type ratio rather than a conventional annualized portfolio Sharpe ratio \citep{sharpe1998sharpe}. Within the peer-to-peer gambling framework, these descriptive statistics summarize the magnitude and cross-trade variability of the synthetic odds-price returns generated by the strategy. After calculation, the 2024 dataset yielded a positive sample mean odds-price return of 1.86\% and an unannualized per-trade Sharpe-type ratio of 1.12 under the simplified frictionless representation. The out-of-sample performance supports the use of the Bradley-Terry model for candidate selection in the 2024 U.S.\ presidential election and indicates its potential applicability to other electoral markets dominated by two leading candidates.

\subsubsection{Model diagnostics}

We conducted the diagnostic checks summarized in \ref{Model Diagnostic}. For the trending Ornstein-Uhlenbeck model, the Q-Q plot of the standardized residuals shows a systematic departure from the normal reference line, indicating that the Gaussian assumption does not fully describe the residual distribution. The residual ACF also reveals substantial negative autocorrelation at lag 1, positive autocorrelation at lag 2, and several smaller correlations outside the approximate 95\% confidence bounds. Thus, the model captures the broad trend and mean-reverting structure of the data, but does not completely account for its short-range dependence or conditional distribution.

For the Bradley-Terry model, the Hosmer-Lemeshow-type test produced a $p$-value greater than 0.05, providing insufficient evidence to reject adequate calibration across the probability groups considered. Overall, the models provide useful approximations to the main features of the data, although the remaining non-Gaussianity and serial dependence should be taken into account when interpreting the prediction intervals and trading results.

\section{Conclusion}\label{sec4}

This study demonstrates the feasibility of adapting a pairs-trading framework to political prediction markets. We model the combined implied probability of the two major-party nominees using a latent time-varying-mean Ornstein-Uhlenbeck process with additive observation noise, use bootstrap-based prediction bounds to identify trading signals, and apply a Bradley-Terry pairwise comparison model to select the candidate-specific odds. In the 2024 out-of-sample evaluation, the strategy generated 130 completed trades, with a mean realized odds-price return of 1.86\% and an unannualized per-trade Sharpe-type ratio of 1.12. These results illustrate the potential of the integrated signal-generation and candidate-selection framework in electoral markets dominated by two leading candidates.

However, the models have some limitations in this paper. First, the models do not account for order queues in real betting exchanges. Other bettors may already have unmatched orders at the same price level, so
a newly submitted order may not be matched immediately. As a result, the odds observed at the signal time may no longer be available when the order reaches the front of the queue. Second, the analysis treats genuinely unavailable observations as missing completely at random (MCAR). This assumption may be restrictive if missing records are associated with periods of high market activity or technical disruption. In such cases, the missing-data mechanism could affect model estimation and prediction. Third, the strategy uses an event-driven exit rule under which a position remains open until the first subsequently observed change in the selected quote. Consequently, holding periods vary across trades and depend on market activity. Future work should examine the distribution and predictability of these quote durations, as well as the sensitivity of the results to alternative fixed-horizon and event-driven exit rules. Fourth, the reported returns measure changes in decimal odds rather than actual betting profits. They do not account for transaction costs or execution conditions in a real betting exchange. Furthermore, computational speed must be improved, as signals can change within minutes in real-time markets and the transaction fee should also be considered. A rolling estimation window may reduce computational cost and improve the timeliness of signal generation. Finally, a simple pairs trading strategy is limited to scenarios with two main outcomes; in multi-party elections, such as in Germany, a basket of correlated odds may be more appropriate. In addition, since the prediction interval is 95\%, altering it to 99\% could lead to different betting decisions. Despite these limitations, the proposed framework provides a useful basis for modeling political-odds dynamics and developing signal-based trading rules, while further applications are needed to evaluate its robustness across different elections and market settings.

\section*{Ethics} This work did not require ethical approval from a human subject or animal welfare committee.

\section*{Data accessibility} The gambling data used in this study, and all R code to produce the results, are available on GitHub at \url{https://github.com/lhy199661/2024-US-Election}. The repository contains the Betfair 2020 and 2024 U.S.\ Election data, which were purchased from the company BetData.  Redistribution of these data is permitted under the purchase agreement.   
BetData collected all data from Betfair using an API. 

\section*{Declaration of AI use} 
The authors used ChatGPT (OpenAI GPT-5.5 in Chat and Codex, OpenAI GPT Sol 5.6)
for language editing and methodological and statistical development. 
For language editing, the LLM was used to locate grammar errors, detect 
deviations from academic writing style, and to suggest updates which
were sometimes implemented exactly and sometimes implemented with edits. 
An initial draft of the statistical procedure described in this work and the code to implement the method was produced without
LLM use. Several refinements and corrections were made by or motivated by LLM output. Bugs in code were located and 
corrected with LLMs. For the method, the only prominent example of 
an update triggered by LLM interaction is our use of the Kalman filter. In the pre-LLM draft of the paper, we erroneously
treated the noised OU process as Markov which led to an
incorrect likelihood.

\section*{Authors' contributions} 
H.L.: data curation, investigation, methodology, validation, visualization, writing-original draft, writing-review and editing; 
L.T.: data curation, methodology, project administration, supervision, writing-review and editing; 
B.R.B.: data curation, methodology, supervision, writing-review and editing; 
C.D.: conceptualization, supervision, writing-review and editing.\\
All authors gave final approval for publication and agreed to be held accountable for the work performed therein.

\section*{Conflict of interest declaration} The authors declare no competing interests.

\section*{Funding} Haoyu Liu gratefully acknowledges funding from the China Scholarship Council.

\section*{Acknowledgements}
The authors thank Valentin Popov for valuable and constructive feedback.

\appendix 
\renewcommand{\thesection}{Appendix \Alph{section}}
\renewcommand{\thesubsection}{\Alph{section}.\arabic{subsection}}
\renewcommand{\theequation}{\Alph{section}.\arabic{equation}}

\section{Time-varying mean-reversion level}\label{trending OU with noise}

For the full-market process $Y_t$ defined in the \cref{fair-market benchmark}, the observed over-round is
\begin{align*}
    \Omega_t^{\mathrm{obs}}=Y_t-1.
\end{align*}
The observed over-round may change over time as market conditions and odds change \citep{williams1999information}. We distinguish this
observed quantity from its equilibrium level, denoted by $\overline{\Omega}(t)$.

The corresponding equilibrium level of the full-market inverse-odds sum is
\begin{align*}
    \mu_Y(t)=1+\overline{\Omega}(t).
\end{align*}
The fair-market benchmark is obtained when $\overline{\Omega}(t)=0$, giving an equilibrium level of one. A constant positive over-round gives a constant equilibrium above one. If $\overline{\Omega}(t)$ changes over time, then the equilibrium level also changes over time.

We therefore model the latent full-market process as
\begin{align*}
    \d P_t
    &=
    \theta
    \left\{
        \mu_Y(t)-P_t
    \right\}
    \d t
    +
    \sigma\,\d W_t, \\
    Y_t
    &=
    P_t+E_t.
\end{align*}
This specification reduces to a standard Ornstein-Uhlenbeck process when $\mu_Y(t)$ is constant. A time-varying $\overline{\Omega}(t)$ instead produces a time-varying mean-reversion level.

The empirical analysis considers only the two major-party nominees. Define their combined inverse odds as
\begin{align*}
    Z_t
    =
    \frac{1}{O_{1,t}}
    +
    \frac{1}{O_{2,t}},
\end{align*}
and let
\begin{align}
    Z_t^{*}
    =
    \sum_{i=3}^{n}\frac{1}{O_{i,t}}
\end{align}
denote the contribution of the remaining outcomes. By construction,
\begin{align*}
    Y_t=Z_t+Z_t^{*}.
\end{align*}

For the following derivation, we approximate the contribution of the remaining outcomes by a deterministic continuously differentiable function,
\begin{align*}
    Z_t^{*}=g(t).
\end{align*}
This assumption simplifies the derivation. A stochastic specification for the remaining outcomes would require an additional model for their drift, variation, and dependence with the two-candidate process.

Define
\begin{align*}
    Q_t=P_t-g(t).
\end{align*}
Since $Z_t=Y_t-g(t)$, the corresponding observation equation is
\begin{align*}
    Z_t=Q_t+E_t.
\end{align*}

Using
\begin{align*}
    \d Q_t=\d P_t-g'(t)\d t
\end{align*}
and $P_t=Q_t+g(t)$ gives
\begin{align*}
    \d Q_t
    &=
    \theta
    \left\{
        \mu_Y(t)-Q_t-g(t)
    \right\}
    \d t
    -
    g'(t)\d t
    +
    \sigma\,\d W_t \notag\\
    &=
    \theta
    \left\{
        \mu_Z(t)-Q_t
    \right\}
    \d t
    +
    \sigma\,\d W_t,
\end{align*}
where
\begin{align*}
    \mu_Z(t)
    =
    \mu_Y(t)
    -
    g(t)
    -
    \frac{g'(t)}{\theta}.
\end{align*}

Substituting $\mu_Y(t)=1+\overline{\Omega}(t)$ gives
\begin{align*}
    \mu_Z(t)
    =
    1+\overline{\Omega}(t)
    -
    g(t)
    -
    \frac{g'(t)}{\theta}.
\end{align*}
Hence,
\begin{align*}
    \mu_Z'(t)
    =
    \overline{\Omega}'(t)
    -
    g'(t)
    -
    \frac{g''(t)}{\theta}.
\end{align*}

The mean-reversion level of the two-candidate process is therefore time varying whenever
\begin{align*}
    \overline{\Omega}'(t)
    -
    g'(t)
    -
    \frac{g''(t)}{\theta}
    \not\equiv 0.
\end{align*}
Changes in the market over-round and in the inverse-odds contribution of the remaining outcomes can therefore change the equilibrium level of $Z_t$. A positive over-round affects the level of the process, but it does not by itself imply a time trend. The direction and magnitude of the trend are estimated from the data.

\section{Conditional moments}\label{One-step Conditional Moments}
\subsection{Markov property}\label{Markov Property}
\begin{lem}
\label{lem:markov}
   The stochastic process $P_t$ satisfies the Markov property.
\end{lem}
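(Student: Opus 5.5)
The plan is to solve the linear SDE for $P_t$ explicitly and read the Markov property off the solution. I treat $P_t$ as the latent state in \cref{eq:ou1}, or its time-varying version with level $\mu_Y(t)$ from \ref{trending OU with noise}. I assume $\mu(\cdot)$ is deterministic and locally integrable, $\theta>0$, $\sigma>0$ are constants, and $P_0$ is independent of the Wiener process $W$.

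\textbf{Step 1: explicit solution.} Apply It\^o's formula to $e^{\theta t}P_t$. Since $e^{\theta t}$ is deterministic and $C^1$, no quadratic-variation correction appears, and
\begin{align*}
    \d\left(e^{\theta t}P_t\right)=\theta e^{\theta t}\mu(t)\,\d t+\sigma e^{\theta t}\,\d W_t .
\end{align*}
Integrating from $s$ to $t$, for any $0\le s\le t$, gives
\begin{align*}
    P_t=e^{-\theta(t-s)}P_s+\theta\int_s^t e^{-\theta(t-u)}\mu(u)\,\d u+\sigma\int_s^t e^{-\theta(t-u)}\,\d W_u .
\end{align*}

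\textbf{Step 2: independence of the innovation.} The first two terms are a deterministic affine function of $P_s$. The stochastic integral depends only on the increments $\{W_u-W_s: s\le u\le t\}$. Let $\mathcal F_s=\sigma(P_0,W_u:u\le s)$. By the independent-increments property of Brownian motion, these future increments are independent of $\mathcal F_s$. Because the integrand is deterministic, the integral is Gaussian with mean $0$ and variance
\begin{align*}
    \sigma^2\frac{1-e^{-2\theta(t-s)}}{2\theta}.
\end{align*}
This requires approximating the Wiener integral by Riemann sums of increments and using that $L^2$ limits preserve independence and Gaussianity.

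\textbf{Step 3: conclude the Markov property.} Write $P_t=f(P_s,\xi_{s,t})$, where $f$ is measurable and $\xi_{s,t}$ is independent of $\mathcal F_s$. The freezing (substitution) lemma for conditional expectations then gives, for every bounded measurable $h$,
\begin{align*}
    \mathbb E\{h(P_t)\mid\mathcal F_s\}=\mathbb E\{h(f(x,\xi_{s,t}))\}\big|_{x=P_s}=\mathbb E\{h(P_t)\mid P_s\}.
\end{align*}
This is the Markov property. As a by-product, the transition law is
\begin{align*}
    \mathcal N\left(e^{-\theta(t-s)}P_s+\theta\int_s^t e^{-\theta(t-u)}\mu(u)\,\d u,\ \sigma^2\frac{1-e^{-2\theta(t-s)}}{2\theta}\right),
\end{align*}
which is what the subsequent conditional-moment and Kalman-filter derivations use. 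With $s$ and $t$ set to retained observation times, $t-s$ is the elapsed interval $q_j$.

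\textbf{Main obstacle.} The only delicate point is Step 2: justifying independence of the Wiener integral from $\mathcal F_s$ and making the conditioning argument rigorous. Once the explicit solution is written, the rest is routine. The lemma concerns only $P_t$; the observed $Y_t=P_t+E_t$ is not Markov, consistent with the paper's remark in the declaration of AI use, and I would not claim otherwise.
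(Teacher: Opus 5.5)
Your proof is correct, but it takes a more general route than the paper's. The paper proves the lemma only for constant $\mu$ at unit-spaced times. It writes the exact AR(1) discretization $P_t=e^{-\theta}P_{t-1}+\mu(1-e^{-\theta})+\sigma\sqrt{(1-e^{-2\theta})/(2\theta)}\,\varepsilon_t$. It then \emph{asserts} that $\varepsilon_t$ is independent of the history up to $t-1$, and reads off that the conditional law of $P_t$ depends only on $P_{t-1}$. That establishes the Markov property of the integer-time skeleton chain only. You instead solve the SDE for arbitrary $s\le t$ and prove that the innovation is independent of $\mathcal{F}_s$ via independent Brownian increments, which is exactly the step the paper takes for granted. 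You then make the conditioning rigorous with the freezing lemma. Your version buys three things: a genuine continuous-time Markov property, which implies the skeleton statement but not conversely; coverage of a time-varying level $\mu_Y(t)$; and the exact transition over arbitrary gaps $q_j$, which is what the retained-observation Kalman filter actually uses. In effect, your argument is the paper's own proof of Lemma~\ref{lem:markov2} for $Q_t$, applied to $P_t$. The paper's version is shorter and ties directly to the AR(1) intuition, but it rests on an unproved independence claim that your Step 2 supplies.
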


The Markov property is a standard property of the Ornstein-Uhlenbeck process \citep{oksendal2003stochastic}. For completeness, we verify this property directly from the transition representation below.

\begin{proof}
    Since the Ornstein-Uhlenbeck process can be regarded as the continuous-time analogue of the discrete-time AR(1) process \citep{tang2009parameter}, its exact discrete-time transition can be obtained at equally spaced observation times. Assuming that the interval between two consecutive observations is one unit of model time, $\Delta t=1$, the transition equation is
   \begin{equation*}
    P_t=e^{-\theta}P_{t-1}+\mu\left(1-e^{-\theta}\right)
    +\sigma\sqrt{\frac{1-e^{-2\theta}}{2\theta}}
    \,\varepsilon_t,
   \end{equation*}
  where $\varepsilon_t\overset{\mathrm{iid}}{\sim} \mathcal{N}(0,1)$ is a standard normal innovation independent of the process history up to time $t-1$. Thus we have
    \begin{align*}
    \delta P_t &=P_{t}-P_{t-1} \\
    &=P_{t-1}(e^{-\theta}-1)+\mu(1-e^{-\theta})+\sigma\sqrt{\frac{1-e^{-2\theta}}{2\theta}}z_{t-1}.
\end{align*}
    We conclude that the conditional distribution of $P_t$ only depends on $P_{t-1}$ rather than $P_1,\cdots,P_{t-2}$. Thus $P_t$ satisfies the Markov property.
\end{proof}

\begin{lem}
\label{lem:markov2}
   The stochastic process $Q_t$ satisfies the Markov property.
\end{lem}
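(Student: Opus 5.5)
The plan is to mirror the proof of \Cref{lem:markov}, replacing the constant level $\mu$ by the time-varying level $\mu(t)$ in \cref{eq:ou2}. Because the SDE for $Q_t$ is linear with deterministic inhomogeneous drift, it has an explicit solution via the integrating factor $e^{\theta t}$. Applying It\^o's formula to $e^{\theta t}Q_t$ gives $\d(e^{\theta t}Q_t)=\theta e^{\theta t}\mu(t)\,\d t+\sigma e^{\theta t}\,\d W_t$. Integrating from $s$ to $t$ then yields, for any $s<t$,
\begin{align*}
Q_t = e^{-\theta(t-s)}Q_s + \theta\int_s^t e^{-\theta(t-u)}\mu(u)\,\d u + \sigma\int_s^t e^{-\theta(t-u)}\,\d W_u .
\end{align*}

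Next I read off the structure of this representation. The middle term is a deterministic function of $(s,t)$. The stochastic integral is Gaussian with mean zero and variance $\sigma^2(1-e^{-2\theta(t-s)})/(2\theta)$. It depends only on the Wiener increments $\{W_u-W_s: u\in[s,t]\}$, so by the independent-increment property of $W$ it is independent of $\mathcal{F}_s=\sigma(W_u:u\le s)$, and hence of $\{Q_u:u\le s\}$. I will also note that $Q_s$ is $\mathcal{F}_s$-measurable.

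It follows that, conditional on $\mathcal{F}_s$, $Q_t$ is normal with mean $e^{-\theta(t-s)}Q_s+\theta\int_s^t e^{-\theta(t-u)}\mu(u)\,\d u$ and the variance above. This conditional law depends on the past only through $Q_s$, which is the Markov property; it is time-inhomogeneous, since the transition depends on $s$ and $t$ and not only on $t-s$. Specialising to consecutive retained observations with gap $q_j$ gives the AR(1)-type transition with coefficient $e^{-\theta q_j}$, in direct analogy with \Cref{lem:markov}.

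An alternative route would use \ref{trending OU with noise}: there $Q_t=P_t-g(t)$ with $g$ deterministic, and a deterministic shift preserves the Markov property. That argument, however, relies on the Markov property of the time-varying $P_t$ in the Appendix rather than on the constant-$\mu$ version covered by \Cref{lem:markov}. I therefore prefer the direct solution. The only real obstacle is justifying the integrating-factor computation when $\mu(\cdot)$ is time-varying. This needs $\mu$ to be locally integrable, which holds under \cref{eq:newmu} whenever $\hat M$ is continuously differentiable, as it is for the spline smoother.
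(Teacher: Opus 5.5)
Your proof is correct and follows the same route as the paper's: you write down the explicit solution from $s$ to $t$, treat the middle term as deterministic, and note that the Wiener integral over $(s,t]$ is independent of the past, which gives a time-inhomogeneous Markov transition (your It\^o integrating-factor derivation and the conditional variance are details the paper simply asserts). One small precision: the paper's Kalman initialisation makes $Q_0$ random, so take $\mathcal{F}_s=\sigma(Q_0, W_u : u\le s)$ with $Q_0$ independent of $W$, so that $Q_s$ and $\{Q_u : u\le s\}$ are genuinely $\mathcal{F}_s$-measurable.
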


\begin{proof}
    For any $s<t$, the solution of the trending OU process is
\begin{align*}
Q_t
&=
e^{-\theta(t-s)}Q_s
+
\theta
\int_s^t
e^{-\theta(t-u)}
\mu(u)\,\d u\\
&\quad+
\sigma
\int_s^t
e^{-\theta(t-u)}
\,\d W_u.
\end{align*}
The second term is deterministic, while the stochastic integral depends only on the Wiener-process increment over $(s,t]$ and is independent of the process history up to time $s$. Consequently, conditional on $Q_s$, the distribution of $Q_t$ is independent of $\{Q_u:u<s\}$. Hence, $\{Q_t\}$ is a time-inhomogeneous Markov process.
\end{proof}

\subsection{Kalman filter for trending OU process with additive white noise}\label{parameter for trending OU with additive noise}
 An Ornstein-Uhlenbeck and trending OU process with additive observation noise do not satisfy the Markov property when only the observed process is considered. The observed series represents a noisy measurement of an underlying mean-reverting latent state, and the presence of measurement noise obscures the true state dynamics. Therefore, the conditional distribution of the next observation depends not only on the current observation but also on previous observations that contain information about the latent state. Consequently, the observed process alone is not Markov.

 Nevertheless, the system can be represented as a linear Gaussian state-space model in which the latent OU state evolves according to a Markov process and the observed variable is a noisy measurement of that state. The joint dynamics are therefore described through a state equation and an observation equation \citep{kalman1960new}.

 Conditional on a correctly specified linear Gaussian state-space model and known parameter values, the Kalman filter recursively computes the exact conditional mean and covariance of the latent state given the available observations \citep{van2004sigma}. Under these assumptions, the filtered conditional mean is the minimum mean-square-error estimator of the latent state. In the empirical implementation, the unknown parameters are replaced by their estimated values, producing plug-in filtered and predicted state estimates.

 The resulting state estimates, prediction-error covariances, and innovations are used for likelihood-based parameter estimation and one-step-ahead forecasting \citep{durrant2001introduction}. The Kalman filter thus provides a recursive method for accounting for both latent state uncertainty and additive observation noise within the assumed state-space specification \citep{bao2024application}.

Trending OU process with additive white noise $\{Z_t\}$ defined by 
\begin{align*}
    \d Q_t 
    & = \theta \{\mu(t)-Q_t\} \, \d t + \sigma \d W_t, \\
    Z_t & = Q_t + E_t, \nonumber
\end{align*}
where $\mu(\cdot)$ is a deterministic, potentially time-varying function of time $t$, and the other terms are defined as in previous part. Define $M\left(t\right)=\mathbb{E}\left(Z_t\right)=\mathbb{E}\left(Q_t\right)$. According to state-space form, the linear Gaussian state-space model is specified as:
\begin{align}
\text{State equation }: &\quad Q_t = a_t Q_{t-1} + b_t + \eta_t, \label{eq:state_ou} \\
\text{Observation equation}: &\quad Z_t = Q_t + E_t, \label{eq:obs_ou}
\end{align}
where $Q_t$ is unobserved latent state (trending OU process), $a_t = e^{-\theta}$, $\theta$ is mean reversion rate, $b_t=\hat M(t)-a_t \hat M(t-1)$ is time-varying offset, $\hat M^{\prime}(t)=\theta\{\mu(t)-\hat M(t)\}$ which comes from smoothed trends, $\eta_t \sim \mathcal{N}(0, q_t)$ which is state noise, $q_t = \frac{\sigma^2}{2\theta}(1 - e^{-2\theta})$, $\sigma$ is state noise volatility, $Z_t$ is observed noisy measurement, $E_t \sim \mathcal{N}(0, \omega^2)$ is the market microstructure noise, and $\omega$ is standard deviation of the market microstructure noise.

The Kalman filter produces two critical quantities for likelihood calculation at each time step $t$, \citep{kalman1960new}:
\begin{itemize}
    \item Innovation : $\tilde{z}_t = Z_t - \hat{Q}_{t|t-1}$, where $\hat{Q}_{t|t-1}$ is the predicted state estimate from the Kalman filter.
    \item Innovation covariance: $S_t = P_{t|t-1} + \omega^2$, where $P_{t|t-1}$ is the predicted state error covariance.
\end{itemize}

For the trending OU with additive noise model, the Kalman filter recursions required for computing $\hat{Q}_{t|t-1}$ and $P_{t|t-1}$ is:

\begin{align}
\hat{Q}_{t|t-1} &= a_t \hat{Q}_{t-1|t-1} + b_t, \notag \\ 
P_{t|t-1} &= a_t^2 P_{t-1|t-1} + q_t, \notag \\ 
\tilde{Z}_t &= Z_t - \hat{Q}_{t|t-1},  \notag \\ 
S_t &= P_{t|t-1} + \omega^2, \label{innovation covariance} \\ 
K_t &= \frac{P_{t|t-1}}{S_t},  \notag \\ 
\hat{Q}_{t|t} &= \hat{Q}_{t|t-1} + K_t \tilde{Z}_t,  \notag \\ 
P_{t|t} &= (1 - K_t) P_{t|t-1}. \notag
\end{align}

Accordingly, after incorporating the observations $Z_1,\ldots,Z_t$, the one-step-ahead predictive distribution is
\begin{align*}
Z_{t+1}\mid Z_1,\ldots,Z_t
\sim
\mathcal{N}\left(
\hat Q_{t+1\mid t},
S_{t+1}
\right),
\quad
S_{t+1}
=
P_{t+1\mid t}+\omega^2.
\end{align*}

We initialize the latent state prior as
\begin{align*}
    Q_0
    \sim
    N\left(
        M(0),
        \frac{\sigma^2}{2\theta}
    \right),
\end{align*}
so that
\begin{align*}
    \hat Q_{0|0}
    &=
    M(0),\\
    P_{0|0}
    &=
    \frac{\sigma^2}{2\theta}.
\end{align*}
The first observation is then incorporated using the standard measurement update \citep{bao2024application}.

A linear Gaussian state-space model implies that the conditional distribution of $Z_t$, given the previous observations $Z_1,\ldots,Z_{t-1}$, is Gaussian. Its mean is the predicted observation $\hat Q_{t\mid t-1}$, and its variance is the innovation variance $S_t$,
\begin{align}
    \mathbb{P}(z_t \mid \mathcal{Z}_{t-1}) = \mathcal{N}\left(\tilde{Z}_t; 0, S_t\right) = \frac{1}{\sqrt{2\pi S_t}} \exp\left(-\frac{\tilde{Z}_t^2}{2S_t}\right). \label{eq:c.10}
\end{align}

This result can be attributed to the fact that
\begin{itemize}
    \item The innovation $\tilde{z}_t = Z_t - \hat{Q}_{t|t-1}$ is a linear combination of Gaussian random variables \citep{kalman1960new}.
    \item By construction of the Kalman filter's a priori estimate $\mathbb{E}[\tilde{z}_t \mid \mathcal{Z}_{t-1}] = 0$.
    \item From ~\cref{innovation covariance}, $\text{Var}(\tilde{z}_t \mid \mathcal{Z}_{t-1}) = S_t$.
\end{itemize}

We use the chain rule of probability, and find the joint likelihood of the full observation sequence $\mathcal{Z} = \{Z_1, Z_2, ..., Z_n\}$ where $n$ = total number of observations is the product of the conditional likelihoods:
\begin{align}
    \mathbb{P}(\mathcal{Z} \mid \theta, \sigma^2, \omega^2) = \prod_{t=1}^n \mathbb{P}(z_t \mid \mathcal{Z}_{t-1}, \theta, \sigma^2, \omega^2), \label{conditional observations}
\end{align}
where the likelihood is conditioned on the parameters of $\theta, \sigma^2, \omega^2$ because the Kalman filter outputs $(\hat{Q}_{t|t-1}, P_{t|t-1}, S_t)$ depend on them.

We convert the \cref{conditional observations}  to a sum:
\begin{align}
    \log \mathbb{P}(\mathcal{Z} \mid \theta, \sigma^2, \omega^2) = \sum_{t=1}^n \log \mathbb{P}(z_t \mid \mathcal{Z}_{t-1}, \theta, \sigma^2, \omega^2). \label{log equation}
\end{align}

We substitute the Gaussian conditional likelihood \cref{eq:c.10} into \cref{log equation}:
\begin{align}
    \log \mathbb{P}(\mathcal{Z} \mid \theta, \sigma^2, \omega^2) = \sum_{t=1}^n \left[ -\frac{1}{2}\log(2\pi) - \frac{1}{2}\log(S_t) - \frac{\tilde{z}_t^2}{2S_t} \right]. \label{log equation2}
\end{align}

For the trending OU with additive noise model, we substitute the Kalman filter quantities into \cref{log equation2} to get the final log-likelihood expression tailored to the trending OU process with additive noise:
\begin{align*}
    L(\theta, \sigma^2, \omega^2; \mathcal{Z}) = -\frac{n}{2}\log(2\pi) - \frac{1}{2}\sum_{t=1}^n \left[ \log\left(P_{t|t-1} + \omega^2\right) + \frac{\left(z_t - \left(a_t \hat{Q}_{t-1|t-1} + b_t\right)\right)^2}{P_{t|t-1} + \omega^2} \right].
\end{align*}

For two successive retained observations, let $q_j$ denote the number of original observation intervals between them. Taking one observation interval as one unit of time, the exact state transition is
\begin{align*}
    Q_{\tau_j}
    &=
    a_jQ_{\tau_{j-1}}
    +
    M(\tau_j)
    -
    a_jM(\tau_{j-1})
    +
    \eta_j,\\
    a_j
    &=
    \exp(-\theta q_j),\\
    \eta_j
    &\sim
    N\left(
        0,
        \frac{\sigma^2}{2\theta}
        (1-a_j^2)
    \right).
\end{align*}

The data are sampled at regular observation times and may include periods with no change in $Z_t$. We treat consecutive identical values as a single retained observation for likelihood estimation. The time spanned by the omitted repetitions is still preserved in the model through the interval between retained observations.

Let $\tau_j$ denote the original observation index of the $j^\text{th}$ retained value. We define
\begin{align*}
    \widetilde Z_j = Z_{\tau_j},
\quad j=1,\ldots,N,
\end{align*}
and
\begin{align*}
    q_j=\tau_j-\tau_{j-1}, \quad j=2,\ldots,N.
\end{align*}

Here, $q_j$ gives the number of original observation intervals between two successive retained values. A value of $q_j>1$ may indicate that the process remained unchanged at intermediate observation points or that one or more scheduled observations were unavailable. We take one observation interval as one unit of time. The interval $q_j$ is therefore used directly in the continuous-time OU state transition.

At each retained observation index, the observation equation is
\begin{align*}
    \widetilde Z_j
    =
    Q_{\tau_j}+E_{\tau_j},
    \quad
    E_{\tau_j}
    \overset{\mathrm{iid}}{\sim}
    N(0,\omega^2),
    \quad
    j=1,\ldots,N,
\end{align*}
where $Q_{\tau_j}$ denotes the latent trending Ornstein-Uhlenbeck process and $E_{\tau_j}$ denotes additive observation noise.

Over the interval $(\tau_{j-1},\tau_j]$, the exact transition of the latent process is
\begin{align*}
    Q_{\tau_j}
    &=
    e^{-\theta q_j}Q_{\tau_{j-1}}
    +
    M(\tau_j)
    -
    e^{-\theta q_j}M(\tau_{j-1})
    +
    \eta_j,
\end{align*}
where
\begin{align*}
    \eta_j
    =
    \sigma
    \int_{\tau_{j-1}}^{\tau_j}
    e^{-\theta(\tau_j-s)}
    \,\mathrm dW_s.
\end{align*}
The state-transition coefficient is therefore
\begin{align*}
    a_j    = e^{-\theta q_j}.
\end{align*}

By It\^o isometry,
\begin{align*}
    \operatorname{Var}(\eta_j)
    &=
    \sigma^2
    \int_{\tau_{j-1}}^{\tau_j}
    e^{-2\theta(\tau_j-s)}
    \,\mathrm ds\\
    &=
    \sigma^2
    \int_0^{q_j}
    e^{-2\theta u}
    \,\mathrm du\\
    &=
    \frac{\sigma^2}{2\theta}
    \left(
        1-e^{-2\theta q_j}
    \right).
\end{align*}
Hence, the process-noise variance is
\begin{align*}
    v_j=
    \frac{\sigma^2}{2\theta}
    \left(
        1-e^{-2\theta q_j}
    \right).
\end{align*}

Indexing the Kalman filter by the retained observations, the prediction recursions are
\begin{align*}
    \hat Q_{j\mid j-1}
    &=
    a_j\hat Q_{j-1\mid j-1}
    +
    \hat M(\tau_j)
    -
    a_j\hat M(\tau_{j-1}),\\
    P_{j\mid j-1}
    &=
    a_j^2P_{j-1\mid j-1}
    +
    v_j.
\end{align*}
The innovation and its conditional variance are
\begin{align*}
    \nu_j
    &=
    \widetilde Z_j-\hat Q_{j\mid j-1},\\
    S_j
    &=
    P_{j\mid j-1}+\omega^2.
\end{align*}
The measurement-update recursions are
\begin{align*}
    K_j
    &=
    \frac{P_{j\mid j-1}}{S_j},\\
    \hat Q_{j\mid j}
    &=
    \hat Q_{j\mid j-1}
    +
    K_j\nu_j,\\
    P_{j\mid j}
    &=
    (1-K_j)P_{j\mid j-1}.
\end{align*}
Thus, unequal intervals between successive retained observations are incorporated through the interval-specific quantities $a_j$ and $v_j$.

The marginal mean function $\hat M(\cdot)$ is first estimated using the smoothing procedure described in the main text. Conditional on this estimated function, the Gaussian innovation log-likelihood is
\begin{align*}
    L\left(
        \theta,\sigma^2,\omega^2;
        \hat M
    \right)
    =
    -\frac{1}{2}
    \sum_{j=1}^{N}
    \left[
        \log(2\pi)
        +
        \log S_j
        +
        \frac{\nu_j^2}{S_j}
    \right].
\end{align*}
The structural parameters are then estimated by
\begin{align*}
    \left(
        \hat\theta,
        \hat\sigma^2,
        \hat\omega^2
    \right)'
    =
    \underset{
        \theta,\sigma^2,\omega^2
    }{\arg\max}
    \;
    L\left(
        \theta,\sigma^2,\omega^2;
        \hat M(\cdot)
    \right) \quad \text{s.t.} \quad \sigma^2 > 0, \omega^2 > 0. 
\end{align*}

Finally, the estimated time-varying mean-reversion level is obtained as
\begin{align*}
    \hat\mu(t)
    =
    \hat M(t)
    +
    \frac{\hat M'(t)}{\hat\theta}.
\end{align*}

\section{Data simulation}\label{Data Simulation}
\subsection{Simulation design}
To validate the theoretical derivations and the Kalman filter–based log-likelihood from \cref{eq7:MLE}, we conduct a Monte Carlo study under a trending OU process with additive white noise. For given true parameters $(\theta, \sigma^2, \omega^2)$, we generate 1000 observations per replication. The first 500 observations are used as the training sample for parameter estimation via L-BFGS maximization of the derived log-likelihood, while the remaining 500 observations are reserved for out-of-sample evaluation. This procedure is repeated 1000 times to assess finite-sample performance. The marginal mean function was specified as $M(t)=5+0.001t+0.5\sin\left(\frac{2\pi t}{200}\right), \quad t=1,\ldots,1000,$ and the generating mean-reversion level was defined by $\mu(t)=M(t)+\frac{M'(t)}{\theta},$ where $M'(t)$ was approximated numerically using finite differences. The
initial latent state was fixed at $Q_1=M(1)$. The parameter values were $\theta=0.15$, $\sigma^2=0.55$, and $\omega^2=0.20$, and the random seed used for data generation was $34351$.

\subsection{Parameter distribution and unbiasedness}
\begin{figure}[H]
    \centering
    \includegraphics[width=3in]{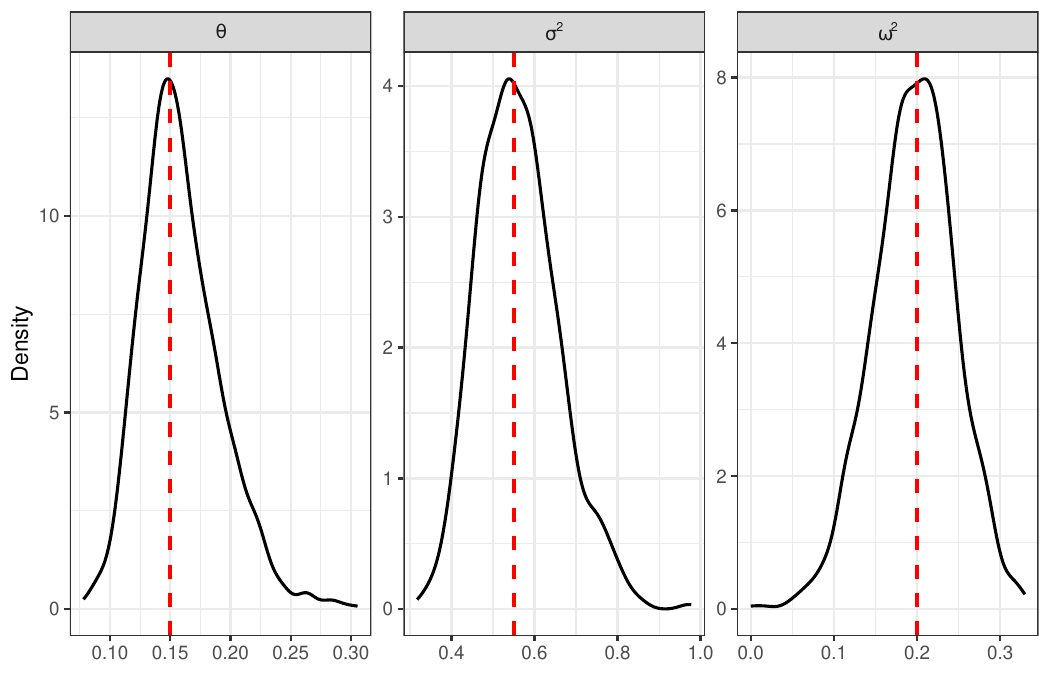}
    \captionsetup{justification=centering}
    \caption{Sampling distributions of the estimated parameters across the simulation replications. The three panels show the distributions of $\theta$, $\sigma^2$, and $\omega^2$, respectively. The black curves represent the empirical density of the parameter estimates, and the red dashed lines indicate the corresponding generating parameter values.}
    \label{figure:distribution}
\end{figure}
Across replications, the sampling distributions are approximately unimodal and centred near the generating parameter values in the simulation settings considered from \Cref{figure:distribution}. 

\begin{figure}[H]
    \centering
    \includegraphics[width=3in]{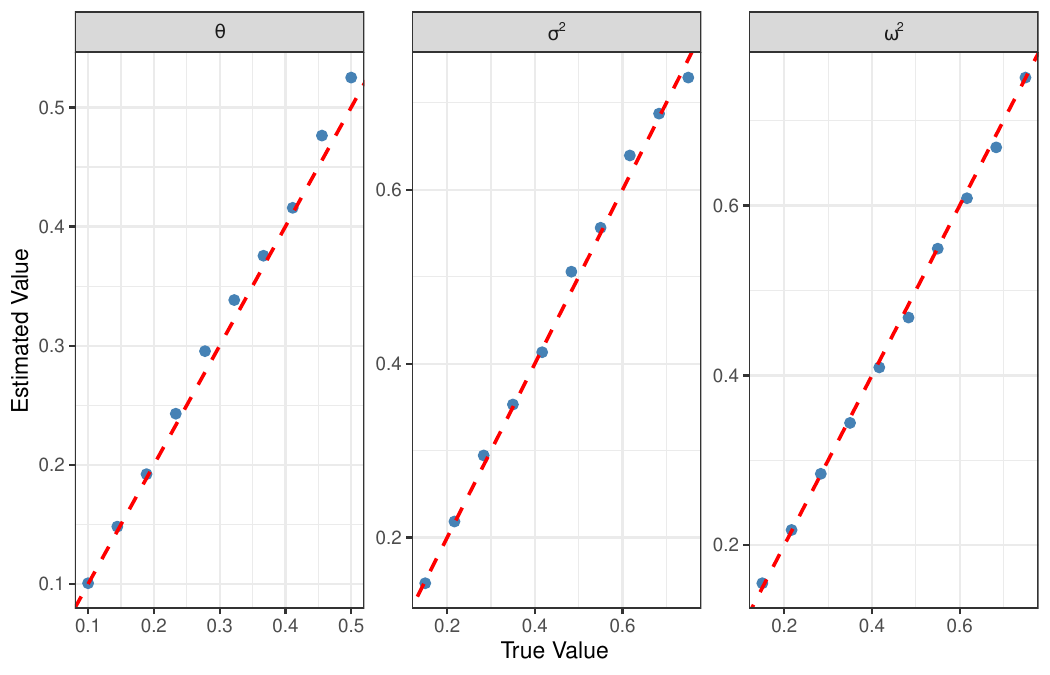}
    \captionsetup{justification=centering}
    \caption{Estimated versus generating parameter values in the simulation study. The three panels correspond to $\theta$, $\sigma^2$, and $\omega^2$, respectively. The blue points show the estimated values for each generating parameter value, and the red dashed lines represent the identity line. Estimates close to this line indicate agreement between the estimated and generating values.}
    \label{figure:unbiased}
\end{figure}
To further examine unbiasedness, we vary $(\theta, \sigma^2, \omega^2)$ over pre-specified grids and compare the estimated means with their corresponding true values. The results from \Cref{figure:unbiased} show no systematic deviation from the 45-degree line, confirming that the estimators are approximately centred around the generating values in the simulation settings considered.

\subsection{Predictive coverage}
\begin{figure}[H]
    \centering
    \includegraphics[width=4in]{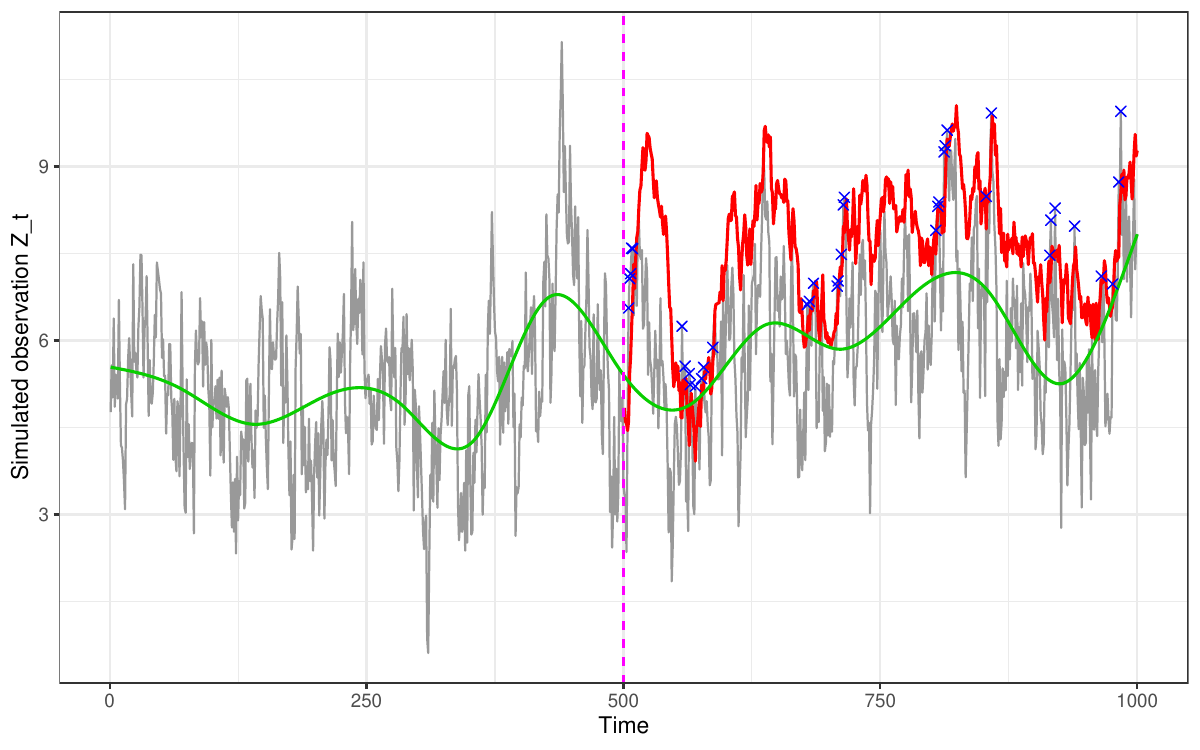}
    \captionsetup{justification=centering}
    \caption{Data Simulation. The  blue vertical lines are signal points, green line is time-dependent mean-reversion level $\mu(\cdot)$, red line is the upper one-sided prediction bound, and purple dashed line is the starting prediction points.}
    \label{figure:simulate}
\end{figure}
Out-of-sample predictive performance is evaluated using one-step-ahead prediction intervals from \cref{eq9:prediction bound}. In a representative simulation from \Cref{figure:simulate}, 25 out of 500 test observations fall outside the 95\% one-sided prediction interval, consistent with nominal coverage. Aggregating over 1000 replications, the empirical coverage probability equals 94.8\%, which closely matches the theoretical 95\% level. This agreement indicates that, under the simulated data-generating equation, the implemented state-space recursions, likelihood evaluation, and filtering procedure behave consistently with the model used to generate the synthetic data.

\section{Forecast calibration and candidate selection}\label{Model Selection}
\subsection{When to place a bet}\label{When}
To evaluate the calibration of the proposed prediction intervals, we compute
\begin{align*}
    \hat{F}_{t\mid t-1}(z_t) = \frac{1}{B}\sum_{b=1}^B I\!\left(Z_t^{*(b)} \leq z_t\right), 
\end{align*}
where $\hat{F}_{t\mid t-1}(z_t)$ represents the bootstrap-estimated probability that the realized value $z_t$ lies within the simulated predictive distribution.
The upper one-sided prediction bound is then defined as
\begin{align*}
    U_t = \min \{ z : \hat{\mathbb{P}}(z) \geq 1-\alpha \}. 
\end{align*}
\begin{figure}[H]
    \centering
    \includegraphics[width=5in]{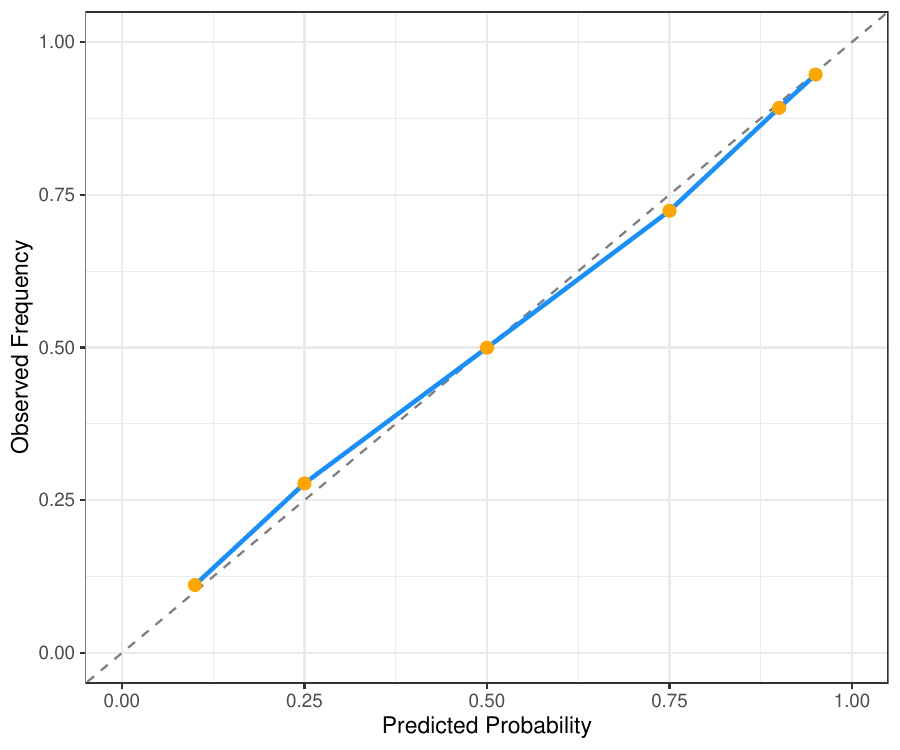}
    \captionsetup{justification=centering}
    \caption{Reliability diagram comparing predicted probabilities with observed frequencies. The orange points show the empirical frequencies within probability groups, and the blue line connects these points. The grey dashed diagonal line represents perfect calibration, where the observed frequency equals the predicted probability.}
    \label{figure:reliability}
\end{figure}


The empirical coverage rate is 95.2\%, which is close to the nominal coverage level of 95\%. In addition, the reliability diagram plots the empirical coverage against nominal quantiles \citep{brocker2007increasing}. \Cref{figure:reliability} illustrates the calibration plot comparing predicted probabilities with observed frequencies. The dashed diagonal line represents perfect calibration, where predicted probabilities match empirical event rates. The orange curve shows the realized calibration performance. Overall, the model demonstrates good calibration, as the observed frequencies closely track the diagonal, particularly in the mid to high probability ranges. Minor deviations are present at lower probabilities, where the model tends to be slightly under-confident. Nonetheless, the results suggest that the probabilistic forecasts capture the empirical distribution reasonably well, supporting their reliability for inference and decision-making.

\subsection{Which bet to place}\label{which}
We use the Bradley-Terry model to determine which candidate-specific odds to select. Let
\begin{align*}
    X_{i,t}
    =
    \begin{pmatrix}
        Y_{i,t}\\
        \Delta Y_{i,t}
    \end{pmatrix},
    \quad
    \Delta Y_{i,t}
    =
    Y_{i,t}-Y_{i,t-1},
    \quad
    i\in\{1,2\},
\end{align*}
where $i=1$ denotes the Republican nominee and $i=2$ denotes the Democratic nominee. The conditional probability that candidate 1 is preferred to candidate 2 is
\begin{align*}
    \pi_t
    &:=
    \mathbb{P}
    \left(
        L_t=1
        \mid
        X_{1,t},X_{2,t}
    \right)\\
    &=
    \frac{1}{
        1+
        \exp\left[
            -\boldsymbol{\beta}^{\top}
            \left(
                X_{1,t}-X_{2,t}
            \right)
        \right]
    }.
\end{align*}
The estimated coefficients for the 2020 U.S.\ presidential election are reported in \Cref{table:4}. No intercept is included in the pairwise logistic model, thereby preserving antisymmetry under an exchange of the two candidates. The Republican nominee's odds are selected when $\pi_t\geq0.5$, whereas the Democratic nominee's odds are selected when $\pi_t<0.5$.

\begin{table}[htbp]
    \caption{Coefficients of the 2020 Bradley-Terry model}
    \label{table:4}
    \centering
    \begin{tabular}{ccccc}
    \hline
    Parameters & Estimate & Std. Error & $z$ value & $Pr(>|z|)$ \\
    \hline
    $Y_{1,t}-Y_{2,t}$ 
    & -0.150 
    & 0.789 
    & -0.190 
    & 0.849 \\
    \hline
    $\Delta Y_{1,t}-\Delta Y_{2,t}$ 
    & 33.1 
    & 23.0 
    & 1.44 
    & 0.150 \\
    \hline    
    \end{tabular}
\end{table}

\section{Model diagnostics}\label{Model Diagnostic}
The outputs of trending OU process with additive white noise model diagnostics show
\begin{figure}[H]
  \centering
  \begin{subfigure}[b]{0.45\textwidth}
 \includegraphics[width=2.5in]{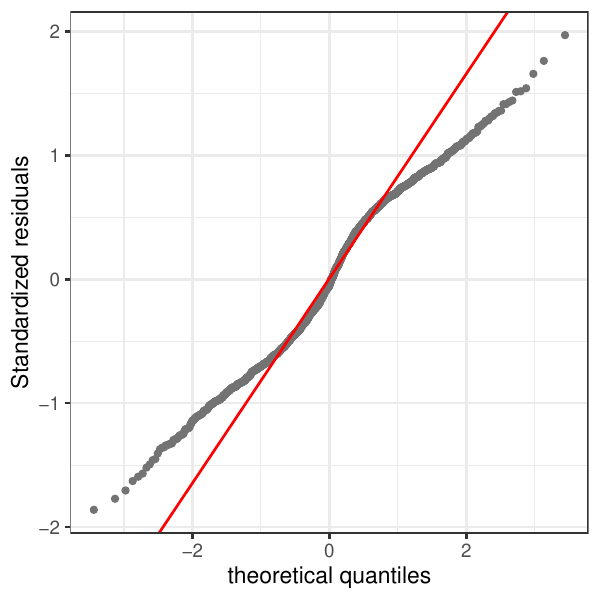}
    \captionsetup{justification=centering}
    \caption{2024 U.S.\ Presidential Election Q-Q Plot}
    \label{fig:QQ1}
  \end{subfigure}
  \hfill
  \begin{subfigure}[b]{0.45\textwidth}
\includegraphics[width=2.5in]{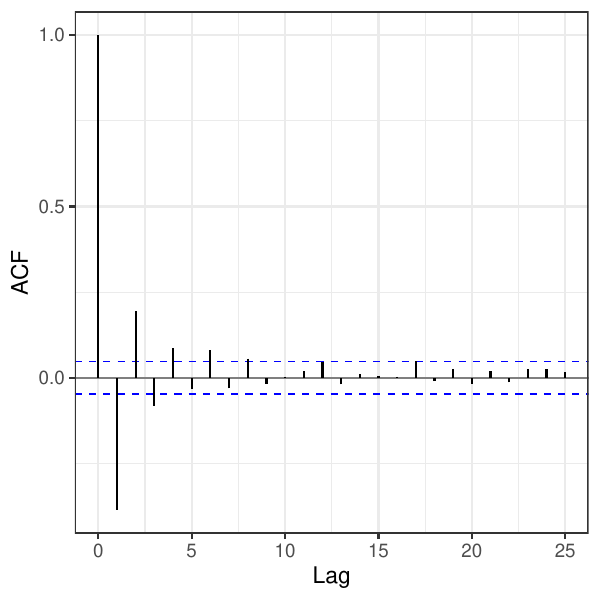}
    \captionsetup{justification=centering}
    \caption{2024 U.S.\ Presidential Election ACF Plot}
    \label{fig:ACF1}
  \end{subfigure}
  \captionsetup{justification=centering}
  \caption{Model diagnostics for the 2024 U.S.\ presidential election. Panel (a) shows the normal Q-Q plot of the standardized residuals, with the red line indicating the normal-reference line. Panel (b) shows the sample autocorrelation function (ACF) of the standardized residuals. The blue dashed lines indicate the approximate 95\% bounds for zero autocorrelation.}
  \label{fig:2024}
\end{figure}

\begin{figure}[H]
  \centering
  \begin{subfigure}[b]{0.45\textwidth}
    \includegraphics[width=2.5in]{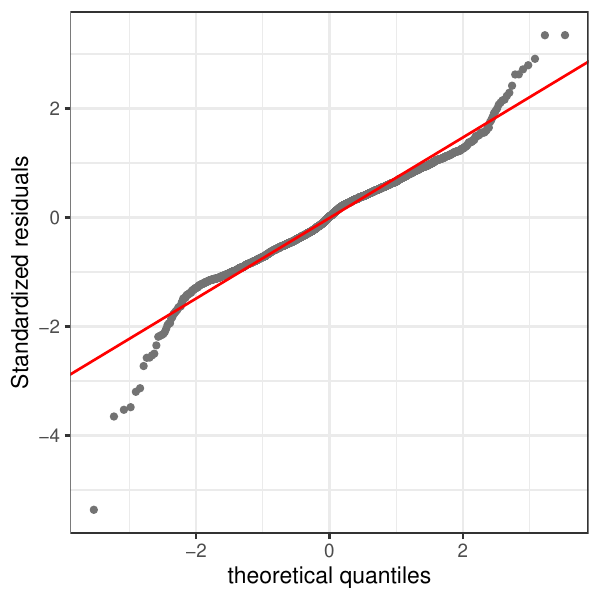}
    \captionsetup{justification=centering}
    \caption{2020 U.S.\ Presidential Election Q-Q Plot}
    \label{fig:QQ2}
  \end{subfigure}
  \hfill
  \begin{subfigure}[b]{0.45\textwidth}
\includegraphics[width=2.5in]{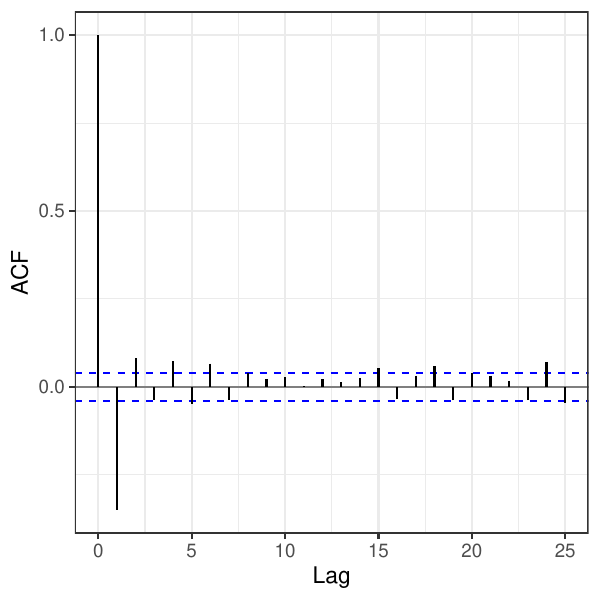}
    \captionsetup{justification=centering}
    \caption{2020 U.S.\ Presidential Election ACF Plot}
    \label{fig:ACF2}
  \end{subfigure}
  \captionsetup{justification=centering}
  \caption{Model diagnostics for the 2020 U.S.\ presidential election. Panel (a) shows the normal Q-Q plot of the standardized residuals, with the red line indicating the normal-reference line. Panel (b) shows the sample autocorrelation function (ACF) of the standardized residuals. The blue dashed lines indicate the approximate 95\% bounds for zero autocorrelation.}
  \label{fig:2020}
\end{figure}

For the Bradley-Terry model, the Hosmer-Lemeshow-type test yielded $\chi^2=9.58$ with eight degrees of freedom and a $p$-value of 0.296, providing insufficient evidence to reject adequate calibration across the groups considered.


\newpage
\begin{spacing}{0.1}

\bibliography{sn-bibliography}
\end{spacing}

\end{document}